\documentclass[a4paper,UKenglish,cleveref,autoref,thm-restate]{lipics-v2021}

\usepackage{amsmath, amssymb}
\usepackage{complexity}

\hideLIPIcs

\title{Approximating the Chv\'atal--Gomory Closure of Capacity-Bounded Min-Closed Systems}
\titlerunning{Approximating the CG Closure}

\author{Stefano {Huber}}{Faculty of Informatics, Università della Svizzera italiana, CH-6962 Lugano, Switzerland \and
Istituto Dalle Molle di studi sull’intelligenza artificiale (IDSIA USI-SUPSI), CH-6962 Lugano,
Switzerland \and \url{https://steber97.github.io/}}{stefano.huber@usi.ch}{https://orcid.org/0009-0001-7592-6401}{}

\author{Monaldo {Mastrolilli}}{Scuola universitaria professionale della Svizzera italiana, CH-6962 Lugano, Switzerland \and Istituto Dalle Molle di studi sull’intelligenza artificiale (IDSIA USI-SUPSI), CH-6962 Lugano, Switzerland}{monaldo.mastrolilli@supsi.ch}{https://orcid.org/0000-0002-2948-9749}{}

\authorrunning{S. Huber and M. Mastrolilli}

\Copyright{Stefano Huber and Monaldo Mastrolilli}

\ccsdesc[100]{Theory of computation~Approximation algorithms analysis}
\ccsdesc[100]{Mathematics of computing~Combinatorial optimization}
\ccsdesc[100]{Theory of computation~Integer programming}
\ccsdesc[100]{Theory of computation~Design and analysis of algorithms}

\keywords{Sum of squares, Chv\'atal--Gomory closure, min-closed systems, PTAS}

\funding{The authors were supported by the Swiss National Science Foundation projects
no.\\~200021\_207429/1 ``Ideal Membership Problems and the Bit Complexity
of Sum of Squares Proofs'' and no.~200021\_212929/1 ``Computational
methods for integrality gaps analysis''.}

\relatedversion{A short version of this paper appears in the proceedings of ISAAC 2026.}

\EventEditors{John Q. Open and Joan R. Access}
\EventNoEds{2}
\EventLongTitle{42nd Conference on Very Important Topics (CVIT 2016)}
\EventShortTitle{CVIT 2016}
\EventAcronym{CVIT}
\EventYear{2016}
\EventDate{December 24--27, 2016}
\EventLocation{Little Whinging, United Kingdom}
\EventLogo{}
\SeriesVolume{42}
\ArticleNo{23}

\newcommand{\sos}{\textnormal{SoS}}

\theoremstyle{definition}
\newtheorem{assumption}[theorem]{Assumption}
\newtheorem{question}[theorem]{Question}

\begin{document}
\nolinenumbers

\maketitle

\begin{abstract}
Optimizing over the $\{0, 1/2\}$ rank-1 Chv\'atal--Gomory (CG) closure of a binary integer linear program is $\NP$-hard. While polynomial-time approximation schemes (PTAS) are established for monotone packing and covering formulations, extending these guarantees to mixed-sign variants remains an open challenge. In this paper, we study the approximability of the CG closure for \emph{$k$-slack bounded (capacity-bounded) min-closed systems}, a generalized packing formulation with mixed-sign constraints that extends weighted Boolean Horn logic. In these systems, a constant upper bound $k \ge 1$ bounds the ratio between the negative penalty coefficient $q$ and the right-hand side capacity $b$ of every constraint, enforcing $q \le k \cdot b$. We prove that a constant-degree sum-of-squares relaxation yields a PTAS for maximizing linear objectives over the first CG closure generated by multipliers in $\{0\}\cup[\tfrac1f,1]$, for any constant integer $f \ge 2$. This closure is contained in the $\{0, 1/2\}$ rank-1 CG closure.
\end{abstract}

\section{Introduction}\label{sec:introduction}

Binary integer linear programming (ILP) is a fundamental framework for modeling discrete optimization problems in theoretical computer science and operations research. A classical geometric approach to solving binary ILPs is based on cutting-plane methods, which iteratively refine the continuous linear relaxation of a problem in order to approximate its integer hull. Among these methods, the Chv\'atal--Gomory (CG) procedure is one of the most fundamental and widely used cut-generating techniques~\cite{Conforti:2014}. The procedure strengthens a fractional relaxation by taking conic combinations of valid inequalities and rounding down the resulting right-hand sides. The general intuition being, if $x_1 + x_2 \leq 1.5$ is a constraint of an ILP, it may well be rewritten as $x_1 + x_2 \leq 1$.

\begin{definition}[Chv\'atal--Gomory Closure]\label{def:CG_closure}
Let $I$ be an instance given by a linear system $Ax \le b$, with
$A \in \mathbb{Z}^{m\times n}$, $b \in \mathbb{Z}^m$, $x\in\{0,1\}^n$, and let $P(I) := \{x\in [0,1]^n : Ax \le b \}$ denote its continuous relaxation.
For any multiplier vector $\lambda \in \mathbb{R}_{\ge 0}^m$ such that $\lambda^\top A \in \mathbb{Z}^n$, the inequality $(\lambda^\top A)x \le \lfloor \lambda^\top b \rfloor$ is called a \emph{rank-1 Chv\'atal--Gomory (CG) cut} for $P(I)$. 

The intersection of $P(I)$ with all rank-1 CG cuts is called the \emph{first CG closure} and is denoted by $P^{(1)}(I)$. Recursively, for $t \ge 1$, the $t$-th CG closure is defined as
$P^{(t)}(I) = (P^{(t-1)}(I))^{(1)}$, with $P^{(0)}(I) = P(I)$. When the instance $I$ is fixed and clear from context, we write $P^{(t)}$ for $P^{(t)}(I)$. Notice that, for all $t\geq 0$, $P^{(t)}(I)$ is a polyhedron~\cite{Conforti:2014}. 

Throughout, the variable bounds $0 \le x_i \le 1$, $i \in [n]$, can be taken to be part of the system $Ax \le b$, so that they too may serve as generators of CG cuts. This is what lets fractional constraint-multipliers, such as $\lambda \in \{0,\tfrac12\}^m$, produce \emph{integral} aggregate coefficients: a half-integral left-hand side is integralized by combining it with the bounds $-x_i\leq 0$ before the right-hand side is rounded down.
\end{definition}

To explicitly capture the underlying mixed-sign structure, we write a rank-1 conic combination with integer left-hand side before the CG rounding step as
\begin{equation}
    \sum_{i \in S} a_i x_i - \sum_{k \in K} q_k y_k \le b,
    \label{eq:generic_f}
\end{equation}
where $a_i,q_k \in \mathbb{Z}_{\ge 1}$, $b \in \mathbb{R}_{\ge 0}$, and $S$ and $K$ are disjoint subsets of $[n]:=\{1, \dots, n\}$, the set of variable indices. Notice that we write $y_k := x_k$ for $k\in K$ to highlight the variables with negative coefficient. The CG cut is obtained by replacing $b$ with $\lfloor b \rfloor$.

By iteratively applying this rounding operator, the CG procedure generates progressively tighter polyhedral relaxations. Many of the most effective classes of valid inequalities in polyhedral combinatorics arise as low-rank CG cuts, including the odd-cycle inequalities for the stable set polytope, blossom inequalities for the matching polytope, simple M\"obius ladder inequalities for the acyclic subdigraph polytope, and simple comb inequalities for the symmetric traveling salesman polytope (see, e.g.,~\cite{Conforti:2014}). Interestingly, all these cuts can be derived as in \cref{def:CG_closure} with $\lambda \in \{0, 1/2\}^m$.

The CG rank required to reach the integer hull has been extensively studied. For polytopes contained in the unit cube, Eisenbrand and Schulz~\cite{EisenbrandS03} proved that $O(n^2 \log n)$ rounds suffice, and this bound was shown to be nearly tight by Rothvo\ss{} and Sanità~\cite{RothvossS17}.

The sum-of-squares ($\sos$) hierarchy provides one of the most powerful general-purpose frameworks for constructing tight convex relaxations in combinatorial optimization. Interestingly, the Chv\'atal--Gomory operator can capture strong linear relaxations that escape bounded-degree $\sos$ relaxations (see, e.g., \cite{FlemingKothariPitassi19}). In particular, the $\sos$-hard instances constructed in~\cite{KurpiszLM17}, which require high-degree $\sos$ certificates, are completely described by the first CG closure $P^{(1)}$. Thus, instances that are difficult for low-degree $\sos$ proofs can be remarkably easy from the perspective of CG cuts. However, this expressive power comes at a substantial computational cost. 

Already the $\{0,\tfrac12\}$ rank-1 CG closure---the CG closure generated by the defining constraints together with the box inequalities $0 \leq x_i \leq 1$, using multipliers $\lambda$ restricted to $\{0,1/2\}$ and followed by the standard CG rounding---is computationally hard: deciding membership in it is $\coNP$-complete, already for binary programs with $0/1$ constraint matrices, so optimizing a linear objective over it is $\NP$-hard~\cite{LetchfordPS11}. We denote this $\{0,\tfrac12\}$ closure as $P^{(1)}_{\{0,1/2\}}$. 

To overcome this intrinsic computational hardness, algorithmic research has increasingly focused on \textit{polynomial-time approximation schemes} (PTAS). In our setting, it is convenient to interpret approximation guarantees in terms of upper bounds induced by Chv\'atal--Gomory relaxations. Let $\mathcal{F}_{A,b} = \{x \in \{0,1\}^n : Ax \le b\}$ denote the set of feasible integer solutions. For an instance $I$ of a maximization problem, let
\[
\mathrm{OPT}(I) = \max\{c^\top x : x \in \mathcal{F}_{A,b}\}
\]
denote the optimal value over the integer-feasible solutions for $c\in\mathbb{R}^n_{\ge 0}$, which is $\NP$-hard to compute. For each $t \ge 0$, the maximization problem over the CG closure $P^{(t)}(I)$ defines a relaxation value
\[
\mathrm{OPT}^{(t)}(I) = \max\{c^\top x : x \in P^{(t)}(I)\},
\]
which induces a hierarchy of increasingly tighter upper bounds:
\[
\mathrm{OPT}(I) \;\le\; \mathrm{OPT}^{(t)}(I) \;\le\; \mathrm{OPT}^{(t-1)}(I) \;\le\; \cdots \;\le\; \mathrm{OPT}^{(0)}(I),
\]
where $\mathrm{OPT}^{(0)}(I)$ is the value of the LP relaxation of the original binary ILP.

We say that an algorithm is a PTAS for the $t$-th CG closure if, for every fixed precision $\varepsilon > 0$ and fixed rank $t\in\mathbb{Z}_{>0}$, the algorithm runs in time polynomial in the input size and outputs a value $U(I)$ satisfying
\[
\mathrm{OPT}(I) \;\le\; U(I) \;\le\; (1+\varepsilon)\,\mathrm{OPT}^{(t)}(I).
\]
Equivalently, for each fixed precision $\varepsilon$, the algorithm computes in polynomial time a $(1+\varepsilon)$-approximate upper bound for the $t$-th CG relaxation. A symmetric formulation applies to minimization problems even though, in this paper, we discuss only maximization problems.

A major milestone for covering problems was achieved by Bienstock and Zuckerberg~\cite{BienstockZ06}, who established a PTAS for optimizing over the $t$-th CG closure for any constant rank $t=O(1)$. Their approach leverages a specialized lift-and-project framework driven by a fundamental structural property: constraint violations are entirely governed by inclusion-minimal infeasible subsets.

Complementing these results for covering, Mastrolilli~\cite{MastrolilliHsos} established the analogous PTAS specifically for packing polyhedra. This is achieved by employing the standard $\sos$ hierarchy for 0/1 problems. Crucially, for pure packing formulations, a constant-degree $\sos$ relaxation approximates the optimization over the $t$-th CG closure within an arbitrarily small error for any constant rank $t$, matching the approximation guarantees available for the covering counterpart.

In the regimes considered so far, covering and packing problems admit a monotone structure, where constraints can be written as $\sum_i a_i x_i \le b$ with consistent sign patterns (nonnegative for packing and nonpositive for covering). This monotonicity underlies all aggregation and rounding techniques used above.

In contrast, general integer programs may contain mixed-sign constraints, where this structure breaks down and the approximation techniques above no longer apply. This highlights a fundamental gap in the literature: while packing and covering admit robust approximation schemes under monotonicity, for mixed-sign integer programs no general approximation framework is currently known. The main obstacle is the loss of monotonicity, which invalidates aggregation and rounding arguments. Characterizing which mixed-sign classes still admit polynomial-time approximation guarantees remains an open problem.

\section{Problem Setting and Overview of Results}

A generalization of packing problems is given by \emph{min-closed systems} (see, e.g.,~\cite{barto_et_al:DFU:2017:6959} and \Cref{def:min-closed} below). In this paper, we analyze the approximability of the CG closure for a class of mixed-sign linear inequality systems whose Boolean feasible regions exhibit a min-closed structure.

We prove that under certain conditions (see \Cref{ass:k_slack}), a constant-degree $\sos$ relaxation yields a PTAS for maximizing linear objectives over the first $\{0\}\cup[\tfrac1f,1]$-CG closure. This closure is defined by rank-1 conic combinations whose non-zero Farkas multipliers on the structural constraints are in $[1/f,1]$ for a fixed constant integer $f \ge 2$. More details are given in the rest of the section.

\subsection{Min-Closed Systems and Valid Inequalities}

We consider a Boolean feasible region $\mathcal{F}_{A,b} = \{x \in \{0,1\}^n : Ax \le b\}$, where the input matrix $A \in \mathbb{Z}^{m \times n}$ and the vector $b \in \mathbb{Z}^m$ have integer components. 

\begin{definition}[Min-Closed Set and Constraint]\label{def:min-closed}
A set $T \subseteq \{0,1\}^n$ is min-closed if for any $x, y \in T$, their coordinate-wise minimum $z = x \wedge y$ (where $z_i = \min(x_i, y_i)$) also belongs to $T$. A linear constraint is min-closed if its Boolean feasible region is a min-closed set.
\end{definition}

Packing constraints are min-closed, but min-closed systems form a strictly broader class. Min-closed constraints have an exact logical characterization in terms of Horn clauses.

\begin{definition}[Boolean Horn Clause and Linear Representation]
Let $X = \{x_1, \dots, x_n\}$ be a set of Boolean variables, and let $y \in \{0,1\}$ be a designated Boolean variable not belonging to $X$. A \emph{Boolean Horn clause} over $X \cup \{y\}$ is a disjunction of literals containing at most one positive literal, which, when present, is $y$. Equivalently, it can be written as an implication of the form
\[
\bigwedge_{i \in P} x_i \;\Rightarrow\; y,
\]
where $P$ (the \emph{premise}) is the set of indices of the negative literals, and $y$ is the unique positive literal in the \emph{conclusion} (it may be absent, in which case the clause is purely negative).

This implication admits a linear representation:
\begin{equation} \label{eq:horn_linear}
    \sum_{i \in P} (1 - x_i) + y \ge 1
    \quad \Longleftrightarrow \quad
    \sum_{i \in P} x_i - y \le |P| - 1.
\end{equation}
We call $|P|$ the \emph{premise width} of the Horn clause.
\end{definition}

\begin{definition}[Generalized Horn Constraint]\label{def:gen_horn}
Let $X = \{x_1, \dots, x_n\}$ be a set of Boolean variables, and let $y = x_j$ for a designated Boolean variable. A constraint $\mathcal{C}$ is defined as a \emph{generalized Horn constraint} if it can be written as:
\begin{equation} \label{eq:constraint_A}
    \sum_{i=1}^n a_i x_i - q y \le b
\end{equation}
where the coefficients $(a_1, \dots, a_n), q \in \mathbb{Z}_{\ge 0}$, $a_j=0$, and the threshold $b \in \mathbb{Z}_{\geq 0}$.
\end{definition}

We call a Boolean Horn clause \emph{composite} if its premise width satisfies $|P| \ge 2$: the special case $a_i = 1$ for $i\in P$, $a_i=0$ for $i\notin P$, $q=1$, $b = |P|-1 \ge 1$ of \Cref{eq:constraint_A} is captured by composite Horn clauses. The unitary implications $x_i \Rightarrow y$ (which correspond to $b = 0$ under \Cref{eq:horn_linear}) and the unconditional assertions lie outside the composite class: these constraints, as we discuss later, are not handled by our algorithm.

A generalized Horn constraint can be seen as a soft version of a packing constraint, where the negative term $q y$ acts as a slack term that is activated whenever the positive variables violate the capacity $b$. The following proposition establishes that generalized Horn constraints are min-closed.

\begin{proposition}[Min-Closed Property]\label{prop:min_closed}
Every generalized Horn constraint $\mathcal{C}$ is min-closed. That is, for any two satisfying assignments
$x^1, x^2 \in \{0,1\}^{n}$, their coordinate-wise minimum $x^1 \wedge x^2$ (equivalently, their bitwise AND) is also a satisfying assignment of $\mathcal{C}$.
\end{proposition}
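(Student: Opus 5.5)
The plan is a direct case analysis on the value of the designated variable $y=x_j$ in the two satisfying assignments. Let $x^1,x^2\in\{0,1\}^n$ satisfy \Cref{eq:constraint_A}, and set $z=x^1\wedge x^2$. Since $a_j=0$, the left-hand side splits as $L(x)=\sum_{i\neq j}a_i x_i - q x_j$. Write $A(x):=\sum_{i\neq j} a_i x_i$ for the positive part. Because all $a_i\ge 0$ and $z\le x^1$, $z\le x^2$ coordinatewise, we get $A(z)\le \min\{A(x^1),A(x^2)\}$. This monotonicity of the positive part is the only property of the coefficients we need.

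\emph{Case 1: $z_j=1$.} Then $x^1_j=x^2_j=1$, so $L(z)=A(z)-q\le A(x^1)-q=L(x^1)\le b$.

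\emph{Case 2: $z_j=0$.} Then at least one of the two assignments, say $x^\ell$, has $x^\ell_j=0$. For this assignment $L(x^\ell)=A(x^\ell)\le b$. Hence $L(z)=A(z)\le A(x^\ell)\le b$.

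In both cases $z$ satisfies $\mathcal{C}$, which proves the proposition. The key point, and the only step requiring care, is Case 2. Here we must choose the assignment whose $y$-coordinate is already $0$, because the negative term $-qy$ cannot be used to absorb mass when $y$ drops to $0$. If we instead compared against an assignment with $y=1$, the bound $A(z)\le A(x)$ would only yield $L(z)\le b+q$, which is not enough. Note that we never use $b\ge 0$, $q\ge 0$ or integrality; the argument only needs $a_i\ge 0$ for $i\neq j$ and that $y$ appears in a single term with nonpositive coefficient.
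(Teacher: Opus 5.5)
Your proof is correct and is essentially the paper's argument: the paper picks the assignment $x^\ell$ whose $y$-coordinate attains $\min(x^1_j,x^2_j)$ and compares $z$ against it using nonnegativity of the $a_i$. Your two cases are exactly this choice written out (in Case 1 either assignment works, in Case 2 the one with $y=0$). One small nit: because the comparison assignment has the same $y$-value as $z$, the sign of $q$ is never used. So your closing requirement that $y$ have a nonpositive coefficient is unnecessary, which matches your own remark that $q\ge 0$ is not needed.
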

\begin{proof}
Let $y^1 := x^1_j$ (the negative variable) and $y^2$ respectively, $z_i = \min(x^1_i, x^2_i)$ for $i\in\{1,\dots,n\}$, and $w = z_j = \min(y^1, y^2)$.
Pick $\ell \in \{1, 2\}$ such that $y^\ell = w$ (the assignment attaining the minimum on the
negative variable). Since $z_i \le x^\ell_i$ for all $i$ and every $a_i \ge 0$,
\[
    \sum_{i=1}^n a_i z_i - q\,w \;\le\; \sum_{i=1}^n a_i x^\ell_i - q\,y^\ell \;\le\; b,
\]
where the last inequality holds because $x^\ell$ satisfies $\mathcal{C}$. Hence
$x^1 \wedge x^2$ satisfies $\mathcal{C}$.
\end{proof}

\begin{remark}[The Representation Gap]\label{rem:representation_gap}
Every min-closed relation is equivalent to a conjunction of Boolean Horn clauses~\cite{JeavonsC95}. However, a single linear min-closed constraint can encode a relation whose Horn-clause representation requires exponentially many clauses (see \Cref{ex:exponential_gap} in \Cref{app:exponential_gap}, where a single generalized Horn constraint is equivalently represented by an exponential amount of Horn clauses). This exponential blow-up implies that an efficient approximation scheme cannot expand a min-closed system into its Horn-clause representation, but rather, it should work with the generalized Horn constraints as they are given. This is what our algorithm does.
\end{remark}

The central question investigated in this paper is the following:

\begin{question}[Main Algorithmic Goal]\label{Q:PTAS_MIN}
Does there exist a PTAS for maximizing linear objectives over the first Chv\'atal--Gomory closure $P^{(1)}$ when the feasible region is defined by arbitrary min-closed linear constraints?
\end{question}

Since resolving \Cref{Q:PTAS_MIN} in its full generality remains an open challenge, we focus on a natural structural subclass of min-closed systems, and we restrict the non-zero multipliers of the first CG closure to an interval that generalizes the classical discrete $\{0, \tfrac12\}$ cuts.

Throughout we write $\Lambda_f := \{0\} \cup \left[1/f, 1\right]$ for this multiplier set, and $P^{(1)}_{\Lambda_f}$ for the corresponding first $\Lambda_f$-CG closure where the Farkas multipliers for the structural constraints live in $\Lambda_f$, whereas the Farkas multipliers for the lower bounds $-x_i\leq 0$ live in $\mathbb{R}_{\ge 0}$ (as explained in \Cref{thm:cg-bound-redundancy}, they can be taken without loss of generality in $[0,1)$). Notice that $P^{(1)}_{\Lambda_f} \subseteq P^{(1)}_{\{0,1/2\}}$, so that a $(1+\varepsilon)$-approximation on $P^{(1)}_{\Lambda_f}$ is an upper bound on $\mathrm{OPT}(I)$ at least as good as a $(1+\varepsilon)$-approximation on $P^{(1)}_{\{0,1/2\}}$.

Algebraically, enforcing $\lambda_\ell \ge \tfrac1f$ ensures that the non-zero weights in the linear combination cannot be arbitrarily small, preventing uncontrolled error propagation during the Chv\'atal--Gomory rounding step. This property allows our sum-of-squares relaxation to preserve a constant degree.

\subsection{Our Results}
\label{sec:our_results}

We identify a regime that renders the mixed-sign nature of generalized Horn systems algorithmically tractable. To this end, we introduce the class of \emph{$k$-slack bounded} constraints.

\begin{assumption}[$k$-Slack Boundedness]\label{ass:k_slack}
There exists a universal constant $k \in \mathbb{Z}_{\ge 1}$ such that every generalized Horn constraint (\Cref{def:gen_horn})
\[
    \sum_{i=1}^n a_i x_i - q\,y \le b,
    \qquad
    a_1, \dots, a_n,\, q,\, b \in \mathbb{Z}_{\ge 0},
\]
defining the system satisfies $q \le k \cdot b$.
\end{assumption}

Exact optimization over $P_{\{0,1/2\}}^{(1)}$ is $\NP$-hard even for monotone packing formulations ($A \in \{0,1\}^{m \times n}$, $q=0$, $b=\mathbf{1}$)~\cite{LetchfordPS11}. Since this monotone case is included in the $k$-slack bounded regime, optimizing over $P^{(1)}_{\Lambda_f}$ provides a bound which is at least as tight as an $\NP$-hard relaxation: this justifies the search for an approximation algorithm.

Our main algorithmic contribution establishes that a constant-degree $\sos$ relaxation provides a $(1+\varepsilon)$-approximation for the first CG closure generated by multipliers in $\Lambda_f$. Notice that $P^{(1)}_{\Lambda_f}$ is a polytope: by \Cref{thm:cg-bound-redundancy} it suffices to intersect the \emph{tight} cuts, whose multipliers on the lower bounds $-x_i \le 0$ lie in $[0,1)$; since $\Lambda_f \subseteq [0,1]$ as well, every multiplier is at most $1$. As $A$ and $b$ are integral, the left-hand side of such a cut is an integer vector bounded by the column sums $\sum_\ell |A_{\ell i}|$, and its right-hand side $\lfloor \lambda^\top b \rfloor$ is bounded too, so both range over finite sets. Hence only finitely many distinct cuts arise, and $P^{(1)}_{\Lambda_f}$ is the intersection of $P(I) \subseteq [0,1]^n$ with finitely many halfspaces.

We state here our main contribution: the $\sos$ hierarchy at constant level yields a $(1+\varepsilon)$-approximation of the first $\Lambda_f$-CG closure for $k$-slack systems of generalized Horn constraints.

\begin{theorem}[Constant-Level $\sos$ Approximation of the $\Lambda_f$-CG Closure]\label{thm:main_ptas}
Let $I$ be an instance of maximizing a linear function $c^\top x$, $c\in\mathbb{R}^{n}_{\ge 0}$, subject to a system of generalized Horn constraints $\mathcal{F}_{A,b} = \{x \in \{0,1\}^n : Ax \le b\}$ satisfying \Cref{ass:k_slack} for a constant $k \ge 1$; fix an integer $f \ge 2$, and let $P^{(1)}_{\Lambda_f}$ be its first $\Lambda_f$-CG closure. For any fixed $\varepsilon > 0$, there exists a constant $D = O\bigl((k+f)/\varepsilon\bigr)$ such that the optimal value $\mathrm{OPT}_D(I)$ of the $D$-$\sos$ relaxation of $\mathcal{F}_{A,b}$ satisfies:
\[
    \mathrm{OPT}(I) \;\le\; \mathrm{OPT}_D(I) \;\le\; (1+\varepsilon)\,\mathrm{OPT}^{(1)}_{\Lambda_f}(I).
\]
\end{theorem}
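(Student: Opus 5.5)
Our plan follows the packing approach of Mastrolilli~\cite{MastrolilliHsos}: take an optimal pseudo-expectation $\tilde{\mathbb{E}}$ of the degree-$D$ $\sos$ relaxation, scale it down slightly, and show that the scaled point satisfies every $\Lambda_f$-CG cut. The left inequality $\mathrm{OPT}(I)\le \mathrm{OPT}_D(I)$ is immediate, since every integral feasible point induces a genuine distribution. For the right inequality, write $x^*_i=\tilde{\mathbb{E}}[x_i]$ and $\bar x := x^*/(1+\varepsilon)$, or a close variant that scales only the variables in the positive parts of the constraints. Then $c^\top \bar x = \mathrm{OPT}_D(I)/(1+\varepsilon)$, so it suffices to show $\bar x\in P^{(1)}_{\Lambda_f}$.

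Fix a cut $\sum_{i\in S}a_ix_i-\sum_{k\in K}q_ky_k\le\lfloor \lambda^\top b\rfloor$ with structural multipliers in $\Lambda_f$; by \Cref{thm:cg-bound-redundancy} we may take the bound multipliers in $[0,1)$. We would use the standard conditioning (decomposition) argument. For a set $T$ of variables, at most $D/2$ of them, the pseudo-expectation can be written as a convex combination, over assignments of $T$, of conditioned pseudo-expectations of degree $D-|T|$. We choose $T$ greedily to consist of the variables with large coefficients in the aggregated positive part, together with the relevant negative variables $y$. In each conditioned branch where the residual (unconditioned) positive part has every coefficient small relative to the slack, we argue directly from the linear constraints that the conditioned point violates the CG cut by at most a small additive amount. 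This residual is then absorbed by the $(1+\varepsilon)$ scaling. In branches whose conditioned assignment is integral on the heavy part, the cut is implied by the original constraints via integrality, since the rounding costs nothing once the relevant variables are fixed.

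The role of the two constants is as follows. Because every nonzero $\lambda_\ell\ge 1/f$, each constraint in the support contributes coefficients worth at least a $1/f$ fraction of its own size. Hence the rounding gain $\lambda^\top b-\lfloor\lambda^\top b\rfloor<1$ can be charged to $O(f)$ units of heavy mass. Because $q\le k b$, the negative term of each constraint is at most $k$ times its capacity. So once a branch sets $y=1$, the constraint relaxes by at most $kb$, which is comparable to the packing part, and it can be charged in the same way. Counting, we need to condition on $O((k+f)/\varepsilon)$ variables before the leftover fractional contribution is at most an $\varepsilon$ fraction of the objective. This gives $D=O((k+f)/\varepsilon)$.

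The main obstacle is the mixed-sign monotonicity failure. In pure packing, scaling $x$ down only helps, but here scaling down the $y$ variables tightens the cut. We would handle this first by scaling only the positive-side variables. Where that breaks the min-closed structure, we would exploit \Cref{prop:min_closed}: in each conditioned branch we use the coordinatewise-minimum structure to move mass from $y$ to the premise variables. We then charge the resulting loss through the bound $q\le kb$, which is exactly where \Cref{ass:k_slack} becomes indispensable.
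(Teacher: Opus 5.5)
There is a genuine gap: your outline never says how cuts with small capacity are handled, and the mechanism you describe cannot handle them. Scaling absorbs only an additive violation proportional to the cut's capacity. From $\sum_i c^{CG}_i x_i\le\tilde b$ you get $\tilde b/(1+\varepsilon)\le\lfloor\tilde b\rfloor$ only when $\tilde b=\Omega(1/\varepsilon)$. For a cut like $x_1+x_2+x_3\le\lfloor 3/2\rfloor$, or any cut with $\lfloor\tilde b\rfloor=0$, no additive error can be absorbed at all. So your light branches that are off by ``a small additive amount'' fail exactly there. The error must also be measured against the cut's capacity, not against ``an $\varepsilon$ fraction of the objective''.

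The paper instead splits on $\tilde b=\sum_\ell\lambda_\ell b_\ell$ with $d=\Theta(1/\varepsilon)$ and $\phi=1-1/d$:
\begin{itemize}
\item If $\tilde b>d$, no conditioning is needed: $\phi\sum_i c^{CG}_i x^\ast_i\le\phi\tilde b<\tilde b-1<\lfloor\tilde b\rfloor$.
\item If $\tilde b\le d$, the $\sos$ point must satisfy the rounded cut \emph{exactly}, via a constant-level certificate.
\end{itemize}

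The ingredient you are missing for the second case is \Cref{thm:restricted_weight_bound}: a tight cut has total negative weight $\Delta_{cut}\le(k+f)\lfloor\tilde b\rfloor+k+f-1$. Two facts give this. First, $q\le kb$ gives $\sum_\ell\lambda_\ell c_\ell\le k\tilde b$. Second, $\lambda_\ell\ge 1/f$ together with $b_\ell\ge 1$ (forced by $1\le c_\ell\le kb_\ell$) bounds the number of active constraints carrying a head variable by $f\tilde b$. That in turn bounds the total round-up when the negative coefficients are integralized. Without the $1/f$ floor, many tiny multipliers would create many negative coefficients of size $1$ while $\tilde b$ stays small. This, not ``charging the rounding gain to heavy mass'', is the role of $f$.

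The weight bound makes both $|K|$ and the violation threshold $\tau\le\tilde b+\Delta_{cut}$ at most $O((k+f)d)$. Complementing $y_k\mapsto 1-y_k$ then turns the cut into a single packing row with capacity $\tilde b+\sum_k q_k$ and threshold at most $\tau+|K|$. \Cref{lem:decomposition_theorem} applies to that row; this is \Cref{lem:sos_exactness}. Any correct version of your conditioning would have to reprove exactly this.

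Second, the obstacle you single out is not real, and your remedy for it does not work. Uniform scaling preserves every inequality $L(x)\le\beta$ with $\beta\ge 0$ that $x^\ast$ satisfies, whatever the sign pattern of $L$, since $L(\phi x^\ast)=\phi L(x^\ast)\le\max\{L(x^\ast),0\}\le\beta$. Scaling the $y$'s tightens nothing once the whole left-hand side is scaled.

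``Scaling only the positive-side variables'', by contrast, does not define a single point. In a Horn system the same variable is a premise in some constraints and cuts and the head in others, yet the scaled point must lie in all cuts at once. The step of moving mass from $y$ to the premises via \Cref{prop:min_closed} has no content that can be checked.

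Finally, conditioning on ``the relevant negative variables'' is not constant-degree for high-capacity cuts. There $|K|$ can grow linearly in $\tilde b$; for example, summing $m$ constraints $x_{i_\ell}-y_\ell\le 1$ with distinct heads gives $|K|=\tilde b=m$. In that regime it is unnecessary anyway, since uniform scaling alone suffices.
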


The proof of \Cref{thm:main_ptas} is the content of \Cref{sec:proof_main_thm}. Notice that $\mathrm{OPT}_D(I)$ might not be computable in polynomial time due to the bit-complexity issues raised by \cite{ODonnell17}, even approximately. The following remark shows that, using \Cref{thm:main_ptas} and~\cite{MastrolilliSoSTractability}, we can compute in polynomial time a rational point $\hat x\in P^{(1)}_{\Lambda_f}$ and a rational upper bound $U$ for $\mathrm{OPT}(I)$.

\begin{remark}[Polynomial-Time Implementation]\label{rem:ptime-sos}
Under the assumptions of \Cref{thm:main_ptas}, for every fixed rational $\varepsilon>0$ and fixed $k,f$, a polynomial-time algorithm returns a rational point $\hat x\in P^{(1)}_{\Lambda_f}$ and a rational number $U$ satisfying
\[
  \mathrm{OPT}(I)
  \le U
  \le (1+\varepsilon)c^\top\hat x
  \le (1+\varepsilon)\mathrm{OPT}^{(1)}_{\Lambda_f}(I).
\]
Thus $U$ is an upper bound on the integer optimum, whereas $\hat x$ is feasible for $P^{(1)}_{\Lambda_f}$ and has objective value at least $\mathrm{OPT}(I)/(1+\varepsilon)$.
The number $U$ need not equal $c^\top\hat x$.

To obtain these outputs, let $Q\ge1$ be an integer such that $Qc\in\mathbb Z^n$; the product of the denominators of the rational objective coefficients is a suitable choice with polynomial encoding length. Set
\[
  d=1+\left\lceil\frac{2}{\varepsilon}\right\rceil,
  \qquad \phi=1-\frac1d,
  \qquad \delta=\frac{\varepsilon}{2Q(1+\varepsilon)},
\]
and take $D=\bigl(2(k+f)+1\bigr)d+2(k+f)-1$ as in \Cref{cor:scaled_in_Pt}. A $D$-$\sos$ certificate (\Cref{def:sos_system}) squares polynomials of degree at most $D$ and multiplies them by the linear axioms, so its total degree is at most $2D+1$. We therefore apply the min-closed pseudoexpectation optimization result of Mastrolilli~\cite[Theorem~4.1 and Corollary~5.8]{MastrolilliSoSTractability} at total degree $r=2D+2$, whose running time is polynomial in the input length and in $\log(1/\delta)$. Every row is min-closed, as that result requires: the structural rows by \Cref{prop:min_closed}, the box rows trivially; note that this condition, unlike \Cref{ass:k_slack}, also holds for the non-negativity bounds.

That result optimizes, to additive accuracy $\delta$, over an augmented moment body $K_r\subseteq S_r$ containing the evaluation vectors of all integer-feasible points. Here $S_r$ denotes the feasible region of the standard total-degree-$r$ $\sos$ moment relaxation, and $K_r\subseteq S_r$ is obtained by additionally requiring the pseudoexpectation to vanish on every polynomial of degree at most $r$ that vanishes on all integer-feasible points. It returns a rational pseudoexpectation  satisfying all degree-$r$ constraints; since $r = 2D+2$, this pseudoexpectation is in particular a solution of the $D$-$\sos$ relaxation of $\mathcal{F}_{A,b}$ in the sense of \Cref{rem:duality}. Its first-moment vector $x$ consequently satisfies $u:=c^\top x\ge\mathrm{OPT}(I)-\delta$.
Notice that $u\ge 0$ since both $x, c\ge 0$. By \Cref{cor:scaled_in_Pt}, $\hat x:=\phi x$ belongs to $P^{(1)}_{\Lambda_f}$. Let $U:=\frac{\left\lfloor Q(u+\delta)\right\rfloor}{Q}$.

Since $Q\,\mathrm{OPT}(I)$ is an integer, the lower bound on $u$ implies $U\ge\mathrm{OPT}(I)\ge0$. As $QU=\lfloor Q(u+\delta)\rfloor$ is a nonnegative integer, either $U=0$---in which case $U\le(1+\varepsilon)c^\top\hat x$ because $c^\top\hat x=\phi u\ge0$---or $U\ge1/Q$. In the latter case, $U\le u+\delta$ gives $u\ge U-\delta\ge(1-Q\delta)U$.
Since $c^\top\hat x=\phi u$, this rearranges to
\[
  U\;\le\;\frac{c^\top\hat x}{\phi\,(1-Q\delta)}\;\le\;(1+\varepsilon)\,c^\top\hat x ,
\]
where the second inequality uses $c^\top\hat x\ge0$ together with $\phi^{-1}\le1+\varepsilon/2$ and $Q\delta=\varepsilon/[2(1+\varepsilon)]$, whose combination is exactly $\bigl(1+\tfrac{\varepsilon}{2}\bigr)\bigl(1-\tfrac{\varepsilon}{2(1+\varepsilon)}\bigr)^{-1}=1+\varepsilon$. The last inequality of the chain follows from $\hat x\in P^{(1)}_{\Lambda_f}$. For fixed $k,f,\varepsilon$, the degree is constant and the encoding lengths of $Q$ and $\delta$ are polynomial in the input length, so both outputs are computable in polynomial time. No additive error remains in the stated guarantees, and no feasibility error is introduced.
\end{remark}

\subsection{Paper Organization}
\label{sec:organization}

The technical derivation of our results proceeds as follows. \Cref{sec:violation_threshold} introduces the \emph{violation threshold} (\Cref{def:violation_threshold}) and establishes our core algebraic tool (\Cref{lem:sos_exactness}), proving that constant-degree $\sos$ proofs certify Chv\'atal--Gomory cuts with bounded capacities exactly.

In \Cref{sec:proof_main_thm}, we characterize the structure of non-dominated cuts in $P^{(1)}_{\Lambda_f}$ and bound their negative weight accumulation. We then combine this structural bound, the algebraic exactness from \Cref{sec:violation_threshold}, and a uniform geometric scaling argument to prove \Cref{thm:main_ptas}.

Finally, in \Cref{sec:discussion} we raise some possible future directions in order to establish an answer to \Cref{Q:PTAS_MIN}.

Due to space constraints, all omitted proofs, along with the requisite preliminaries on the $\sos$ proof system (\Cref{sec:preliminaries}), are deferred to the appendix.

\section{The Violation Threshold and \texorpdfstring{$\sos$}{SoS} Exactness}\label{sec:violation_threshold}

Throughout the proof of \Cref{thm:main_ptas} we handle each cut by a case analysis on a single integer parameter, the \emph{violation threshold} $\tau$ (\Cref{def:violation_threshold}): the largest number of positive variables that can be set to $1$ before the cut is violated. Small and large thresholds are handled by different mechanisms: if $\tau$ is bounded by a constant, then a constant-degree $\sos$ relaxation certifies the integer rounding of the Chv\'atal--Gomory cut exactly (\Cref{lem:sos_exactness}). If, instead, $\tau$ is large, the error introduced by rounding down the right-hand side is a negligible fraction of the capacity and we absorb it by scaling the solution by a uniform geometric factor.

\begin{definition}[Violation Threshold $\tau$]\label{def:violation_threshold}
Consider a valid fractional inequality $\sum_{i \in S} a_i x_i - \sum_{k \in K} q_k y_k \le b$ as defined in \cref{eq:generic_f}. Assume w.l.o.g. that the positive coefficients are sorted in non-decreasing order, $a_1 \le a_2 \le \dots \le a_{|S|}$. The violation threshold $\tau$ is defined as:
\begin{equation}
    \tau := \max\left\{i \in \{0,1,\dots,|S|\}\; :\; \sum_{j=1}^{i} a_j \le b + \sum_{k \in K} q_k \right\}
\end{equation}
\end{definition}

When $\tau = O(1)$, the size of any minimal violating subset of positive variables is constant. This structural property allows a constant-level $\sos$ certificate for the Chv\'atal--Gomory rounding step.

\begin{lemma}[$\sos$ Exactness for Bounded Thresholds]\label{lem:sos_exactness}
Let $\sum_{i \in S} a_i x_i - \sum_{k \in K} q_k y_k \le b$ be a fractional inequality as in \cref{eq:generic_f}, obtained as a nonnegative combination of the constraints of $P$, and let its violation threshold be at most $\tau$, with $\tau + |K| \ge 1$. Then its Chv\'atal--Gomory rounding $\lfloor b \rfloor - \sum_{i \in S} a_i x_i + \sum_{k \in K} q_k y_k \ge 0$ admits a $(\tau + |K| + 1)$-$\sos$ certificate (see \Cref{def:sos_system}) from the constraints of $P$.
\end{lemma}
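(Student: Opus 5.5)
The plan is to reduce the statement to a finite case analysis over integral \emph{local configurations} of constant size. Each case will be certified by multiplying the degree-one inequality $h(x):=b-\sum_{i\in S}a_ix_i+\sum_{k\in K}q_ky_k\ge 0$ by a nonnegative local indicator polynomial. Write $g(x):=\lfloor b\rfloor-\sum_{i\in S}a_ix_i+\sum_{k\in K}q_ky_k$ for the target.

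Two facts drive the argument. First, $h\ge 0$ is itself derivable in degree $1$ from $P$, since it is a nonnegative combination of the rows of $P$ and of the box constraints. Second, on Boolean points $g$ is integer-valued and $g\ge h-(b-\lfloor b\rfloor)>h-1$. Hence for any $0/1$ assignment with $h\ge 0$ we have $g\ge 0$, and for any assignment with $g<0$ we have $h<0$.

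\textbf{Step 1 (killing large supports).} For every $T\subseteq S$ with $|T|=\tau+1$, I will show that $x_T:=\prod_{i\in T}x_i$ is certified to be $0$ in degree $\tau+2$. Multiply $h\ge 0$ by the nonnegative monomial $x_T$ and multilinearize using $x_i^2=x_i$. This gives
\[
x_T\Bigl(b-a(T)+\sum_{k\in K}q_ky_k\Bigr)-\sum_{i\in S\setminus T}a_i\,x_Tx_i\ \ge 0 .
\]
The sorting in \Cref{def:violation_threshold} gives $a(T)\ge\sum_{j\le\tau+1}a_j>b+\sum_k q_k$. Bounding each $y_k\le 1$ under the multiplier $x_T$ then yields $-c\,x_T\ge 0$ for some $c>0$, while $x_T\ge 0$ holds trivially. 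So every monomial of degree $>\tau$ in the $S$-variables vanishes modulo the derivation.

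\textbf{Step 2 (local partition of unity).} Modulo these vanishing monomials, the inclusion--exclusion expansion $1=\sum_{T\subseteq S}x_T\prod_{i\in S\setminus T}(1-x_i)$ truncates to a sum over $|T|\le\tau$. In this truncated form each term $x_T\prod_{i\notin T}(1-x_i)$ becomes a polynomial of degree $\le\tau+1$. I expect this to be representable as a nonnegative (squared) indicator $\mathbf 1_T$ of degree $\le\tau+1$; this is the standard local-to-global fact used for packing in~\cite{MastrolilliHsos}. I will then refine by the Boolean values of the $|K|$ negative variables. For each $J\subseteq K$ multiply by $y_J\prod_{k\in K\setminus J}(1-y_k)$. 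This gives nonnegative local indicators $\mathbf 1_{T,J}$ of degree at most $\tau+|K|+1$, which sum to $1$ modulo the ideal.

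\textbf{Step 3 (case analysis).} Decompose $g=\sum_{T,J}\mathbf 1_{T,J}\,g$ modulo the ideal. On each term, $g$ reduces to the constant $g(T,J)$, since the indicator forces the values of all variables appearing in $g$. If $g(T,J)\ge 0$, the term is a nonnegative multiple of a square. If $g(T,J)<0$, then $h(T,J)<0$. Multiplying $h\ge 0$ by $\mathbf 1_{T,J}$ gives $h(T,J)\,\mathbf 1_{T,J}\ge 0$, hence $\mathbf 1_{T,J}\equiv 0$, and that term contributes $0$.

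I expect the main obstacle to be Step 2. The difficulty is making the truncated indicators genuine squares, or nonnegative combinations of squares and axiom products, of degree exactly $\tau+|K|+1$ rather than something larger. A secondary concern is checking that the degree-$(\tau+2)$ products of Step 1 fit within the $(\tau+|K|+1)$ budget of \Cref{def:sos_system}. I would first try to use $\tau+|K|\ge 1$ together with the convention that axioms are multiplied by squares of degree at most $D$. Failing that, I would invoke the known fact that a pseudoexpectation annihilating all $(\tau+1)$-monomials is locally a convex combination of integral points of support $\le\tau$. I would then dualize that fact via \Cref{rem:duality}.
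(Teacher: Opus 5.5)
\textbf{Verdict: there is a genuine gap.} Step~2 is where all the difficulty of the lemma sits, and as literally stated it fails.

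\textbf{Why Step~2 fails as stated.} The truncated polynomial $\tilde{\mathbf 1}_T=\sum_{U\subseteq S\setminus T,\,|T|+|U|\le\tau}(-1)^{|U|}x_{T\cup U}$ is neither a square nor nonnegative on $\{0,1\}^n$. For $\tau=1$, $\tilde{\mathbf 1}_\emptyset=1-\sum_{i\in S}x_i$ equals $-1$ at any point with two ones in $S$. So it is not a sum of squares modulo $\mathcal I_2$, and its nonnegativity holds only relative to the monomials you killed in Step~1. Proving that relative version is the actual content.

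To make it exact, write $\tilde{\mathbf 1}_T\chi_J=(\tilde{\mathbf 1}_T\chi_J)^2+(\tilde{\mathbf 1}_T-\tilde{\mathbf 1}_T^2)\chi_J$, with $\chi_J:=y_J\prod_{k\in K\setminus J}(1-y_k)$. You must then certify \emph{both} signs of every monomial $x_V\chi_J$ with $\tau+1\le|V|\le 2\tau$ in the remainder. Step~3 adds further monomials with $|V|$ up to $2\tau+1$ when you multiply by $h$.

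\textbf{Why your Steps~1 and fallback do not cover this.}
\begin{itemize}
\item Step~1 only gives $-x_V\ge 0$ for $|V|=\tau+1$.
\item ``Vanishes modulo the derivation'' is not an \sos{} inference rule: you cannot substitute $0$ for $x_T$ inside products.
\item Redoing Step~1 for a larger $V$ needs the square $x_V$ of degree $|V|>\tau+1$, which exceeds the budget.
\item The Cauchy--Schwarz argument that forces $\tilde{\mathbb E}[x_V]=0$ is a dual statement. \Cref{rem:duality} only transfers certificates to pseudo-expectations, not conversely.
\item Recovering an \emph{exact} certificate from the dual side would need strong duality with attainment. That is not automatic, and it is delicate here precisely because Step~1 forces the relaxation onto a face.
\end{itemize}

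\textbf{A possible repair (my suggestion, not the paper's).} Take $V=V_1\cup V_2$ with $|V_1|=|V_2|=\tau+1$, which is possible whenever $\tau+1\le|V|\le 2\tau+2$. Then $\pm 2x_V\chi_J\equiv(x_{V_1}\chi_J\pm x_{V_2}\chi_J)^2-x_{V_1}\chi_J-x_{V_2}\chi_J$. Each $-x_{V_i}\chi_J\ge 0$ is your Step~1 multiplied by the idempotent $\chi_J$, using the box rows of $P$, with squared polynomials of degree $\tau+|K|+1$. With this identity, Steps~2--3 go through at the claimed level. You did not supply it, though, and it is essentially the substance of the packing lemma you appeal to.

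\textbf{How the paper proceeds.} The paper never rebuilds that machinery.
\begin{itemize}
\item It substitutes $y'_k=1-y_k$. The target becomes $\lfloor C\rfloor-w^\top z$ for the single packing row $w^\top z\le C$, where $w=(a,q)$ and $C=b+\sum_k q_k$.
\item This row has violation threshold at most $\tau+|K|$, since among the $i$ smallest weights at most $|K|$ come from $q$.
\item It applies \Cref{lem:decomposition_theorem} with $\pi=\tau+|K|$.
\item It undoes the affine substitution and expands $C-w^\top z=\sum_\ell\lambda_\ell g_\ell$ to get a certificate from the constraints of $P$.
\end{itemize}
Your enumeration over $J\subseteq K$ is exactly what this complementation provides for free.
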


\Cref{lem:sos_exactness} follows from the following result of~\cite{MastrolilliHsos} on packing constraints, which in turn can be derived from the Decomposition Theorem of~\cite{KarlinMN11}.

\begin{lemma}[Lemma 6.1 in~\cite{MastrolilliHsos}]\label{lem:decomposition_theorem}
    Consider any packing problem instance given by a matrix $A \in \mathbb{Z}^{m \times n}_+$ and a vector $b \in \mathbb{R}^{m}_+$. Let $\pi = O(1)$ be a fixed positive integer. Then, for any valid inequality $a_0 - a^\top x \ge 0$ for $\mathcal{F}_{A,b} := \{ x \in \{0,1\}^n: Ax\le b\}$ with violation threshold at most $\pi$\footnote{For packing inequalities, the violation threshold is defined as in \Cref{def:violation_threshold} using $K=\emptyset$. In \cite{MastrolilliHsos} it is called ``pitch''.}, there exists a $(\pi+1)$-$\sos$ certificate of non-negativity from the constraints $b - Ax \ge 0$.\footnote{The Lemma as stated in~\cite{MastrolilliHsos} gets the $(\pi + 1)$-$\sos$ certificate as the first step of the proof.}
\end{lemma}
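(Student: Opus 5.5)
The plan is to argue by LP/SDP duality: I would show that every degree-$(\pi+1)$ pseudoexpectation $\tilde{\mathbb E}$ consistent with $b-Ax\ge 0$ (in the sense of \Cref{rem:duality}) satisfies $\tilde{\mathbb E}[a_0-a^\top x]\ge 0$. Existence of the $(\pi+1)$-$\sos$ certificate then follows from strong duality for the $\sos$ system. Throughout I restrict attention to $S:=\operatorname{supp}(a)$, since variables outside $S$ do not appear in the target inequality.

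\textbf{Step 1 (killing large monomials).} Let $J\subseteq S$ with $|J|=\pi+1$. By the definition of the violation threshold, the $\pi+1$ smallest coefficients already exceed $a_0$, so $a(J)>a_0$. Since the inequality is valid, the indicator of $J$ is infeasible. Because the system is packing, hence monotone, some row $\ell$ has $A_{\ell}(J)>b_\ell$. Multiply the axiom $b_\ell-A_\ell x\ge 0$ by the monomial $x^J$, which is a square modulo $x_i^2=x_i$, and add the term $x^J\sum_{i\notin J}A_{\ell i}x_i$, which is a nonnegative combination of squares $x^{J\cup\{i\}}$. Reducing with $x_i^2=x_i$ gives
\[
(b_\ell-A_\ell(J))\,x^J\ge 0,
\]
where the coefficient $b_\ell-A_\ell(J)$ is negative. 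Hence $-x^J\ge 0$, and together with $x^J\ge 0$ we get $\tilde{\mathbb E}[x^J]=0$. The same argument applies verbatim to every infeasible $T\subseteq S$ with $|T|\le\pi$, giving $\tilde{\mathbb E}[x^T]=0$ as well. The degree used is $|J|+1\le\pi+2$, which stays within the $(\pi+1)$-$\sos$ bookkeeping of \Cref{def:sos_system} because the multiplier is the square of a degree-$\le\pi+1$ multilinear polynomial.

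\textbf{Step 2 (local integrality, following \cite{KarlinMN11}).} With all monomials of degree $\pi+1$ on $S$ vanishing, define for each $T\subseteq S$ with $|T|\le\pi$ the Möbius weight
\[
p(T)=\sum_{T\subseteq U\subseteq S,\ |U|\le\pi}(-1)^{|U\setminus T|}\,\tilde{\mathbb E}[x^U].
\]
I would then show three things: $p(T)\ge 0$; $\sum_T p(T)=1$; and $\tilde{\mathbb E}[x_i]=\sum_{T\ni i}p(T)$. Together these say that the first moments on $S$ are those of a genuine distribution over subsets of size at most $\pi$. By Step 1, $p(T)=0$ for every infeasible $T$, so the distribution is supported on feasible sets. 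Every feasible set satisfies $a(T)\le a_0$ by validity, and therefore $\tilde{\mathbb E}[a_0-a^\top x]=\sum_T p(T)\,(a_0-a(T))\ge 0$.

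\textbf{Main obstacle.} I expect the hard step to be proving $p(T)\ge 0$ at the stated degree. My first attempt would express $p(T)$ as $\tilde{\mathbb E}\bigl[x^T\prod_{i\in S\setminus T}(1-x_i)\bigr]$, truncated using the vanishing of all degree-$(\pi+1)$ monomials. This polynomial equals the square of $x^T\prod_{i\in S\setminus T}(1-x_i)$ on the cube, but its formal degree is large, so positivity is not immediate. The remedy is to observe that, modulo the vanishing ideal from Step 1, this polynomial coincides with a square of a polynomial of degree at most $\pi+1$. Alternatively, one can invoke the Decomposition Theorem of~\cite{KarlinMN11} directly, which provides exactly this convex decomposition of a level-$(\pi+1)$ pseudoexpectation once every $(\pi+1)$-subset of $S$ is forced to zero. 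Once nonnegativity is settled, the remainder is linear bookkeeping.
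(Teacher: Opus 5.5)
\paragraph{Verdict: genuine gap in Step~1.} Your overall route matches what the paper points to: pass to pseudoexpectations, force the degree-$(\pi+1)$ monomials on $S=\operatorname{supp}(a)$ to vanish, use the Karlin--Mathieu--Nguyen local-integrality (Möbius) argument to turn the first moments on $S$ into a distribution over feasible subsets, then dualize back. The paper gives no proof of its own; it only cites \cite{MastrolilliHsos} and the Decomposition Theorem of \cite{KarlinMN11}. Step~2 is sound once Step~1 holds. Cauchy--Schwarz on the moment matrix extends $\tilde{\mathbb E}[x^J]=0$ from $|J|=\pi+1$ to every $U\subseteq S$ with $\pi+1\le|U|\le 2\pi+2$, and the truncated $p_T$ satisfies $p_T\equiv p_T^2$ modulo those monomials.

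\paragraph{Where Step~1 fails.} The problem is the case $|J|=\pi+1$, which is exactly what Step~2 needs. It is also exactly the hypothesis your fallback to the Decomposition Theorem needs ("once every $(\pi+1)$-subset is forced to zero"). The terms $A_{\ell i}x^{J\cup\{i\}}$ you add are squares only of the degree-$(\pi+2)$ monomials $x^{J\cup\{i\}}$. In a $(\pi+1)$-$\sos$ certificate (\Cref{def:sos_system}), every squared polynomial, including those in $S_0$, must have degree at most $\pi+1$. Dually, $\tilde{\mathbb E}[x^{J\cup\{i\}}]$ is an off-diagonal entry of the order-$(\pi+1)$ moment matrix and can be negative.

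This is not cosmetic. Your Step~1 never uses $\pi\ge 1$, so it would prove that at level $1$ every infeasible singleton has zero pseudo-moment. That is false. Take the single constraint $2x_1+x_2+x_3\le 1$ and set
$\tilde{\mathbb E}[x_1]=s$, $\tilde{\mathbb E}[x_2]=\tilde{\mathbb E}[x_3]=10s$, $\tilde{\mathbb E}[x_1x_2]=\tilde{\mathbb E}[x_1x_3]=-s$, $\tilde{\mathbb E}[x_2x_3]=-2s$, $\tilde{\mathbb E}[x_1x_2x_3]=2s$, with $s=1/50$. A Schur-complement check shows that the moment and localizing matrices indexed by $\{1,x_1,x_2,x_3\}$ are positive semidefinite. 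Yet $\tilde{\mathbb E}[x_1]>0$, although $x_1=0$ at every feasible point, and $\tilde{\mathbb E}[x_1x_2]<0$.

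\paragraph{How to repair it.} You need one of two things.
\begin{enumerate}
\item Include the bounds $x_i\ge 0$ among the axioms. Then $x^{J\cup\{i\}}\equiv (x^J)^2x_i$ is a degree-$(\pi+1)$ square times an axiom. This changes the statement, which allows only $b-Ax\ge 0$. It is harmless for the paper's use in \Cref{lem:sos_exactness}, because under $z=(x_S,1-y_K)$ the bounds $z_j\ge 0$ become box constraints of $P$.
\item Allow squares of degree $\pi+2$ in $S_0$. This is the Laurent-style indexing common in the Lasserre literature, where the moment matrix is one order above the localizing matrices. Your Step~1 is correct in that indexing, but it then yields a $(\pi+2)$-$\sos$ certificate in the sense of \Cref{def:sos_system}, not the stated $(\pi+1)$.
\end{enumerate}

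\paragraph{A separate point on duality.} Your final appeal to ``strong duality'' needs an argument. Separation alone only places $a_0-a^\top x$ in the \emph{closure} of the cone of $(\pi+1)$-$\sos$ certificates. Obtaining an actual certificate requires closedness of the truncated cone modulo $\mathcal{I}_2$, or some other argument.
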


The proof of \Cref{lem:sos_exactness} is deferred to \Cref{sect:proof_lem:sos_exactness}.

\section{Proof of \texorpdfstring{\Cref{thm:main_ptas}}{Proof of the main Theorem}}\label{sec:proof_main_thm}

This section proves \Cref{thm:main_ptas} by showing that the scaling $\hat{x} = (1 - \frac{1}{d})x^\ast$ of a pseudo-expectation vector $x^\ast$ from a constant-level $\sos$ relaxation satisfies all valid rank-1 $\Lambda_f$-CG cuts. The proof is organized as follows:

\begin{enumerate}
    \item Structural Analysis of CG Cuts: in \Cref{sec:structural_analysis_cg_cuts}, we identify the class of non-dominated rank-1 cuts. We prove that restricting the non-negativity bounds' multipliers to $[0,1)$ is without loss of generality (\Cref{thm:cg-bound-redundancy}) and that the total negative weight in any non-dominated cut is bounded by a linear function of its capacity (\Cref{thm:restricted_weight_bound}).
    
    \item Satisfiability under Geometric Scaling: in \Cref{sec:satisfiability_under_geometric_scaling}, we analyze the validity of the scaled solution $\hat{x} = (1-1/d)x^\ast$ (for a carefully chosen constant $d$) by distinguishing two cases based on the cut capacity $\tilde{b}$:
    \begin{itemize}
        \item \emph{Low Capacity ($\tilde{b} \le d$):} we show that these cuts have bounded violation threshold. Consequently, the $D$-$\sos$ relaxation satisfies these cuts via the algebraic certificate provided by \Cref{lem:sos_exactness} and \Cref{rem:duality} (\Cref{lem:sos_degree_all_ranks}).
        \item \emph{High Capacity ($\tilde{b} > d$):} we prove that uniform geometric scaling is sufficient to satisfy the rounded capacity constraint, irrespective of the violation threshold of the cut (\Cref{lem:scaling_all_ranks}).
    \end{itemize}
    
    \item Approximation Guarantee: finally, in \Cref{sec:approx_guarantee}, we show that the scaled solution $\hat{x}$ lies in $P^{(1)}_{\Lambda_f}$ (\Cref{cor:scaled_in_Pt}) and we derive the $(1+\varepsilon)$-approximation ratio for the maximization objective.
\end{enumerate}
Throughout this section, we refer to CG cuts and closures over $\lambda\in\Lambda_f^m$, omitting the explicit reference to the multiplier set when clear from context.

For the entire section, we will assume that $I$ is an instance of a maximization problem of the linear function $c^\top x$ ($c\in\mathbb{R}^{n}_{\ge 0}$) subject to a set of linear constraints $Ax \le b$ satisfying \Cref{ass:k_slack}, together with the box constraints $0 \le x \le 1$. We will refer to the resulting polytope $P(I)$ as $P$. Notice that the lower bounds $x_i \ge 0$ ($-x_i \leq 0$), unlike the upper bounds $x_i \leq 1$, fall outside the $k$-slack regime, yet both bounds are necessary in order to create a valid CG closure. Hence, in the following discussion, lower bounds are handled with a dedicated argument, while upper bounds are assumed to be part of $Ax \le b$.

\subsection{Structural Analysis of CG Cuts}\label{sec:structural_analysis_cg_cuts}

We start by describing what the cuts of the first CG closure look like:

\begin{remark}\label{rem:cuts_first_closure}
    
Recall $P$ is defined by a system $Ax \le b$ of $m$ constraints satisfying \Cref{ass:k_slack}. 
Each constraint $\ell \in \{1, \dots, m\}$ is a generalized Horn constraint (\Cref{def:gen_horn}) and has structural form:
\[
    \sum_{i \in S_\ell} a_{\ell, i} x_i - c_\ell x_{j_\ell} \le b_\ell
\]
where $S_\ell \subseteq \{1, \dots, n\} \setminus \{j_\ell\}$ is the set of positive variable indices for constraint $\ell$, and the coefficients satisfy $c_\ell \le k b_\ell$ (i.e., the system satisfies \Cref{ass:k_slack}). Notice that coefficients $a_{\ell,i} = 0$ for $i\notin S_\ell$ and $c_\ell$ might be zero in case the negative variable is not present in constraint $\ell$ (e.g., for box constraints $x_i\le 1$).

Consider a conic combination of the structural constraints with multipliers $\lambda_\ell \in \Lambda_f$ and the non-negativity bounds $-x_i \le 0$ with multipliers $\mu_i \ge 0$.

Let $w_i(\lambda)$ denote the aggregated structural coefficient for variable $x_i$ prior to bound inclusion:
\[
    w_i(\lambda) = \sum_{\ell : i \in S_\ell} \lambda_\ell a_{\ell, i} - \sum_{\ell : j_\ell = i} \lambda_\ell c_\ell.
\]
The resulting inequality before rounding is:
\[
    \sum_{i=1}^n \big( w_i(\lambda) - \mu_i \big) x_i \le \sum_{\ell=1}^m \lambda_\ell b_\ell
\]

The latter is a valid CG cut when $w_i(\lambda) - \mu_i$ is an integer, for every $i\in [n]$.
We define the minimal fractional absorber $\mu^{tight}_i = w_i(\lambda) - \lfloor w_i(\lambda) \rfloor$, ensuring $\mu^{tight}_i \in [0, 1)$. The cut generated using $\mu = \mu^{tight}$ is referred to as the \emph{tight cut}. 
\end{remark}

Now we prove that, when building the first CG closure, the multipliers $\mu_i$ of the constraints $-x_i\leq 0$ can be taken in $[0, 1)$. In fact, whenever some $\mu_i > \mu_i^{tight}$, we are de facto creating a dominated cut, which is redundant.

\begin{theorem}[Redundancy of Non-Minimal Multipliers for Non-Negativity Bounds]\label{thm:cg-bound-redundancy}

Using the same notation as in \Cref{rem:cuts_first_closure}, if a CG cut is generated using an arbitrary valid multiplier vector $\mu$ such that its components $\mu_i = \mu^{tight}_i + d_i$ for some nonnegative integers $d_i$ that are not all zero; then, this cut is dominated by the corresponding tight cut (using $\mu_i = \mu^{tight}_i$). Consequently, adding any integer weight $d_i \ge 1$ to the non-negativity bounds $-x_i \le 0$ is redundant in the formulation of the first CG closure $P^{(1)}_{\Lambda_f}$.

\end{theorem}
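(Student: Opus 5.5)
The plan is a direct comparison of the two cuts together with the box constraints. Fix $\lambda \in \Lambda_f^m$ and write $w_i := w_i(\lambda)$ and $\beta := \sum_\ell \lambda_\ell b_\ell$.

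\textbf{Step 1: write both cuts explicitly.} With $\mu_i = \mu^{tight}_i + d_i$, every coefficient $w_i - \mu_i = \lfloor w_i \rfloor - d_i$ is an integer. Hence the non-tight multiplier gives the cut
\[
\sum_{i} \bigl(\lfloor w_i\rfloor - d_i\bigr) x_i \le \lfloor \beta \rfloor .
\]
The tight multiplier gives
\[
\sum_i \lfloor w_i\rfloor x_i \le \lfloor \beta\rfloor .
\]
Both cuts share the same right-hand side. The reason is that the multipliers on the bounds $-x_i \le 0$ contribute $0$ to the right-hand side, so the rounded value $\lfloor \beta \rfloor$ does not depend on $\mu$.

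\textbf{Step 2: show the tight cut dominates.} Since $d_i \ge 0$ and $x_i \ge 0$ on $P$, every $x \in P$ satisfies
\[
\sum_i (\lfloor w_i\rfloor - d_i) x_i \le \sum_i \lfloor w_i\rfloor x_i .
\]
So any $x\in P$ that satisfies the tight cut also satisfies the non-tight one. Equivalently, the non-tight cut is the sum of the tight cut and the inequalities $-d_i x_i \le 0$, both of which are already valid on $P$. Intersecting $P$ with the non-tight cut therefore removes nothing beyond what the tight cut already removes.

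\textbf{Step 3: conclude redundancy.} Every cut in the definition of $P^{(1)}_{\Lambda_f}$ arises from some pair $(\lambda,\mu)$. By Steps 1 and 2, replacing $\mu$ by $\mu^{tight}$ yields a cut at least as strong. The closure is therefore unchanged if we restrict to $\mu = \mu^{tight}\in[0,1)^n$.

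It remains to check that every valid $\mu$ has the form $\mu^{tight}+d$ with $d \in \mathbb{Z}^n_{\ge 0}$. Integrality of $w_i - \mu_i$ together with $\mu_i \ge 0$ forces $\mu_i \in w_i - \mathbb{Z}$, and $\mu^{tight}_i$ is the smallest nonnegative element of this set. The only subtle point is this bookkeeping about the right-hand side and the representation of $\mu$; no step is a real obstacle.
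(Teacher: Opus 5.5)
Your proposal is correct and follows essentially the same route as the paper. Both arguments write $\mu_i=\mu^{tight}_i+d_i$, note that the bound multipliers leave the right-hand side $\lfloor\sum_\ell\lambda_\ell b_\ell\rfloor$ unchanged, and use $d_i\ge 0$, $x_i\ge 0$ to show that the non-tight left-hand side never exceeds the tight one, so the tight cut dominates; your check that every valid $\mu$ decomposes this way ($\mu^{tight}_i$ being the least nonnegative element of $w_i-\mathbb{Z}$) matches the paper's uniqueness remark.
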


The proof of \Cref{thm:cg-bound-redundancy} is given in the appendix (\Cref{app:proof_of_thm_cg-bound-redundancy}).

\Cref{thm:cg-bound-redundancy} allows us to restrict our analysis of the first $\Lambda_f$-CG closure exclusively to non-dominated tight cuts. We now prove that this restriction preserves the structural negative-weight bound of the original system.

\begin{theorem}[Preservation of Negative Weight Bounds in the CG Closure]\label{thm:cg-bound-preservation}
    Let $P$ satisfy \Cref{ass:k_slack} with constant $k$, and let $C_{tight}$ be any non-dominated rank-1 $\Lambda_f$-CG cut for $P$, obtained with multipliers $\lambda \in \Lambda_f^m$ and $\mu \in [0,1)^n$, written in standard form as:
    \[
    \sum_{i=1}^n c^{CG}_i x_i \le \beta = \left\lfloor \sum_\ell \lambda_\ell b_\ell \right\rfloor;
    \]
    then, for any variable $x_i$ exhibiting a negative coefficient ($c^{CG}_i < 0$), its magnitude is bounded by $|c^{CG}_i| \le k( \beta + 1 )$.
    Consequently, the magnitude of the negative coefficients across all non-dominated valid inequalities in $P^{(1)}_{\Lambda_f}$ remains bounded by $O(\beta)$.
\end{theorem}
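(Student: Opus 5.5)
Fix a negative coefficient $c^{CG}_i<0$ of the tight cut. By \Cref{thm:cg-bound-redundancy} we may take $\mu_i=\mu_i^{tight}=w_i(\lambda)-\lfloor w_i(\lambda)\rfloor$, so $c^{CG}_i=\lfloor w_i(\lambda)\rfloor$. Since $c^{CG}_i\le -1$, we get $w_i(\lambda)<0$ and
\[
|c^{CG}_i| = -\lfloor w_i(\lambda)\rfloor < |w_i(\lambda)|+1 .
\]
It therefore suffices to bound $|w_i(\lambda)|$ by roughly $k\sum_\ell\lambda_\ell b_\ell$ and then pass to $\beta$ via the floor.

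First I bound the negative part of $w_i(\lambda)$. The positive contributions $\lambda_\ell a_{\ell,i}$ can only help, so
\[
|w_i(\lambda)| \le \sum_{\ell: j_\ell=i}\lambda_\ell c_\ell .
\]
By \Cref{ass:k_slack}, $c_\ell\le k b_\ell$ for every row. This is where it matters that the lower bounds $-x_i\le 0$ appear only through $\mu$ and never contribute negative mass to any $w_i$; the upper bounds $x_i\le 1$ have $c_\ell=0$. Since all $b_\ell\ge 0$ and $\lambda_\ell\ge 0$, I can extend the sum to all rows:
\[
|w_i(\lambda)| \le k\sum_{\ell:j_\ell=i}\lambda_\ell b_\ell \le k\sum_{\ell}\lambda_\ell b_\ell =: kB .
\]

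Next I pass from $B$ to $\beta=\lfloor B\rfloor$, using $B<\beta+1$. This gives $|w_i(\lambda)|<k(\beta+1)$, and hence $|c^{CG}_i|<k(\beta+1)+1$. The naive estimate is thus off by an additive $1$, and removing it is the step I expect to be the main obstacle.

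To remove the extra $1$, I would exploit integrality. Since $|c^{CG}_i|$ is an integer,
\[
|c^{CG}_i| = \lceil |w_i(\lambda)|\rceil \le \lceil kB\rceil .
\]
I then need $\lceil kB\rceil\le k(\lfloor B\rfloor+1)$, which holds because $kB<k(\lfloor B\rfloor+1)$ and the right-hand side is an integer, as $k\in\mathbb{Z}_{\ge1}$. This gives the claimed bound $|c^{CG}_i|\le k(\beta+1)$.

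For the final claim, I treat the case $\beta=0$ separately: there $k(\beta+1)=k$ is still a constant, so the bound is $O(\max(\beta,1))$. This is presumably how $O(\beta)$ should be read, and I would note this explicitly.
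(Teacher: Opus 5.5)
Your proof is correct and follows the paper's argument: drop the positive contributions to get $-w_i(\lambda)\le\sum_{\ell:j_\ell=i}\lambda_\ell c_\ell\le k\sum_\ell\lambda_\ell b_\ell$, use $\sum_\ell\lambda_\ell b_\ell<\beta+1$, and remove the extra additive $1$ via integrality of both $c^{CG}_i$ and $k(\beta+1)$. The paper writes this last step as $|c^{CG}_i|<k(\beta+1)+1$ using $\mu^{tight}_i<1$, while you write it as $\lceil kB\rceil\le k(\beta+1)$; these are the same step. Your caveat that the final $O(\beta)$ should be read as $O(\beta+1)$ is justified: halving $x_1-y\le 1$ gives the tight cut $-y\le 0$, which has $\beta=0$ and a negative coefficient of magnitude $1$.
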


The proof of \Cref{thm:cg-bound-preservation} is in the appendix (\Cref{app:proof-cg-bound-preservation}).

\begin{theorem}[Bound on Total Negative Weight]
\label{thm:restricted_weight_bound}
For any non-dominated rank-1 $\Lambda_f$-CG cut written as $\sum_{i=1}^n c^{CG}_i x_i \le \beta$ as in \Cref{thm:cg-bound-preservation}, generated from $P$, and satisfying \Cref{ass:k_slack} with constant $k$, with multipliers $\lambda \in \Lambda_f^m$ for some constant $f\ge 2$ and $\mu \in [0,1)^n$, the total negative weight $\Delta_{cut} := \sum_{i: c^{CG}_i < 0} |c^{CG}_i|$ satisfies: $\Delta_{cut} \le (k+f)\beta + (k+f) - 1$, where $\beta := \left\lfloor \sum_\ell \lambda_\ell b_\ell \right\rfloor$.
Consequently, for fixed $k$ and $f$, $\Delta_{cut} = O(\beta)$.
\end{theorem}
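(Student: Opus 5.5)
The plan is to bound each negative coefficient of the tight cut by its raw negative contribution plus a rounding loss of less than $1$. The raw contributions are summed via \Cref{ass:k_slack}, and the number of rounding losses is bounded using $\lambda_\ell \ge 1/f$. By \Cref{thm:cg-bound-redundancy} we may assume the cut is tight, i.e.\ $\mu=\mu^{tight}$. With the notation of \Cref{rem:cuts_first_closure}, this gives $c^{CG}_i = w_i(\lambda)-\mu^{tight}_i = \lfloor w_i(\lambda)\rfloor$.

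\emph{Step 1: one negative coefficient.} Fix $i$ with $c^{CG}_i<0$. Then $w_i(\lambda)<0$, and $|c^{CG}_i| = \lceil -w_i(\lambda)\rceil < -w_i(\lambda)+1$. Dropping the nonnegative positive part of $w_i(\lambda)$ gives
\[
|c^{CG}_i| < \sum_{\ell : j_\ell = i} \lambda_\ell c_\ell + 1 .
\]
In particular, every such $i$ has at least one constraint $\ell$ with $j_\ell=i$, $\lambda_\ell>0$ and $c_\ell\ge 1$.

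\emph{Step 2: summing the slack parts.} Each constraint has exactly one designated negative variable $j_\ell$, so the index sets $\{\ell : j_\ell = i\}$ are disjoint for distinct $i$. Summing Step 1 over the set $N^- := \{i : c^{CG}_i<0\}$ and using $c_\ell\le k b_\ell$ from \Cref{ass:k_slack} gives
\[
\Delta_{cut} < \sum_\ell \lambda_\ell c_\ell + |N^-| \le k\sum_\ell \lambda_\ell b_\ell + |N^-| < k(\beta+1) + |N^-| .
\]
The last inequality uses that $\beta$ is the floor of $\sum_\ell\lambda_\ell b_\ell$, so $\sum_\ell\lambda_\ell b_\ell<\beta+1$. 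The box rows $x_i\le 1$ have $c_\ell=0$ and contribute nothing.

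\emph{Step 3: counting the negative variables.} This is the step I expect to carry the real content, since it is where $\lambda_\ell\ge 1/f$ is needed. Let $L := \{\ell : \lambda_\ell>0,\ c_\ell\ge 1\}$. By Step 1 and the disjointness above, $|N^-|\le |L|$. For $\ell\in L$, the condition $1\le c_\ell\le k b_\ell$ with $b_\ell$ integral forces $b_\ell\ge 1$, and $\lambda_\ell\in\Lambda_f\setminus\{0\}$ forces $\lambda_\ell\ge 1/f$. Hence
\[
\frac{|L|}{f} \le \sum_{\ell\in L}\lambda_\ell b_\ell \le \sum_\ell \lambda_\ell b_\ell < \beta+1 ,
\]
where dropping the rows outside $L$ is allowed because every $b_\ell\ge 0$. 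Thus $|L|<f(\beta+1)$, and integrality gives $|N^-|\le |L|\le f\beta+f-1$.

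\emph{Conclusion.} Combining Steps 2 and 3 yields $\Delta_{cut} < (k+f)\beta + k + f - 1$. Since $\Delta_{cut}$ is an integer, this is even stronger than the claimed $\Delta_{cut}\le(k+f)\beta+(k+f)-1$. The points to double-check are the strictness of $\lceil x\rceil<x+1$ and that $\mu_i$ plays no role in the negative coefficients beyond the floor identity.
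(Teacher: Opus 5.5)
Your proof is correct and takes essentially the paper's route. It uses the identity $c^{CG}_i=\lfloor w_i(\lambda)\rfloor$ for tight cuts, drops the positive parts, sums the slack contributions via $c_\ell\le k b_\ell$, counts active rows via $\lambda_\ell b_\ell\ge 1/f$, and finishes with integrality.

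The differences are minor. You charge the rounding loss $<1$ per negative variable and inject $N^-$ into the active rows, whereas the paper uses subadditivity of $\lceil\cdot\rceil$ per active row. Rounding $|L|$ before combining in fact gives the slightly sharper $\Delta_{cut}\le (k+f)\beta+k+f-2$. One small caveat: the strict inequality in Step~2 needs $N^-\neq\emptyset$, and the empty case, being trivial, should be set aside separately, as the paper does for $I_{active}=\emptyset$.
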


The proof of \Cref{thm:restricted_weight_bound} is given in the appendix (\Cref{app:proof-restricted-weight-bound}).

\begin{remark}
The bound $\Delta_{cut} = O(\beta)$ ensures that the negative weight accumulation in our 
CG closure is bounded by the system's capacities. This ensures that the level of the $\sos$ certificate of \Cref{lem:sos_exactness} is bounded by $O(\beta)$, and independent of the number of variables $n$.
\end{remark}

\subsection{Satisfiability Under Geometric Scaling}\label{sec:satisfiability_under_geometric_scaling}

With these key ingredients in place, we are ready to prove the main \Cref{thm:main_ptas}.
The strategy is as follows: for every instance $I$ satisfying \Cref{ass:k_slack}, we pick a constant $d := \lceil 1/\ln(1+\varepsilon)\rceil + 1$ depending on the precision parameter $\varepsilon$, such that $\frac{d}{d-1} \leq 1+\varepsilon$ (see the proof of \Cref{thm:main_ptas}). Let $x^\ast$ be a solution of the $D$-$\sos$ relaxation (with $D$ as in \Cref{lem:sos_degree_all_ranks}) achieving the optimal value $\mathrm{OPT}_D(I)$, and show that, using a geometric scaling argument, $(1-\frac{1}{d}) x^\ast$ is a valid solution for all rank-1 $\Lambda_f$-CG cuts. Letting $\mathrm{OPT}^{(1)}_{\Lambda_f}(I)$ denote the optimum over $P^{(1)}_{\Lambda_f}$, this implies $\mathrm{OPT}^{(1)}_{\Lambda_f}(I) \geq (1-\frac{1}{d}) \mathrm{OPT}_D(I)$ (because we are dealing with a maximization problem), which allows us to conclude $\mathrm{OPT}_D(I) \le (1+\varepsilon) \mathrm{OPT}^{(1)}_{\Lambda_f}(I)$.

We proceed by proving that $\left(1-\frac{1}{d}\right) x^\ast$ is a valid solution for all rank-1 $\Lambda_f$-CG cuts. The proof is split into two cases: when the cut has low capacity (i.e., $\tilde{b} \le d$), we argue that $x^\ast$ is already a valid solution for the cut (\Cref{lem:sos_degree_all_ranks}); scaling the solution down by the factor $\left(1-\frac{1}{d}\right) \in (0,1)$ cannot violate a cut with nonnegative right-hand side, because the left hand side is scaled by the same factor, so $\left(1-\frac{1}{d}\right) x^\ast$ satisfies it as well. When, instead, the cut has high capacity (i.e., $\tilde{b} > d$), we show that the geometric scaling is enough to guarantee that $\left(1-\frac{1}{d}\right) x^\ast$ satisfies the cut (\Cref{lem:scaling_all_ranks}).

\begin{lemma}[Constant $\sos$ Level for Low-Capacity Cuts]
\label{lem:sos_degree_all_ranks}
Let $I$ be an instance of a maximization problem subject to a system of generalized Horn constraints satisfying \Cref{ass:k_slack} for some constant $k$. Consider a non-dominated rank-1 $\Lambda_f$-CG cut $\sum_{i=1}^n c^{CG}_i x_i \le \lfloor \tilde{b}\rfloor$ generated with multipliers $\lambda_\ell \in \Lambda_f$ for some constant $f\ge 2$, with fractional capacity $\tilde{b} = \sum_\ell \lambda_\ell b_\ell \ge 0$, coefficients $c^{CG} \in \mathbb{Z}^n$, and negative support $K := \{i : c^{CG}_i < 0\}$. Then its violation threshold $\tau$ satisfies $\tau \le \bar\tau := (k+f+1)\lfloor \tilde{b} \rfloor + (k+f-1)$ and $|K| \le (k+f)\lfloor \tilde{b} \rfloor + (k+f-1)$, so by \Cref{lem:sos_exactness} the cut admits a $(\bar\tau + |K| + 1)$-$\sos$ certificate, where $\bar\tau + |K| + 1 \le \bigl(2(k+f)+1\bigr)\lfloor\tilde{b}\rfloor + \bigl(2(k+f) - 1\bigr)$.

In particular, if $\tilde{b} \le d$ for some integer $d \ge 2$ then, by \Cref{rem:duality}, the $D$-$\sos$ relaxation with $D = \bigl(2(k+f)+1\bigr)d + \bigl(2(k+f) - 1\bigr)$ satisfies the cut.
\end{lemma}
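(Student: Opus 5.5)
The plan is to bound $|K|$ and $\tau$ using only integrality of the coefficients together with \Cref{thm:restricted_weight_bound}, and then to call \Cref{lem:sos_exactness} and \Cref{rem:duality} for the $\sos$ part. Write $\beta := \lfloor \tilde b\rfloor$. Since the cut is non-dominated and tight, \Cref{thm:restricted_weight_bound} gives that the total negative weight satisfies $\Delta_{cut} = \sum_{i\in K} |c^{CG}_i| \le (k+f)\beta + (k+f) - 1$.

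\textbf{Step 1: bounding $|K|$.} Every $c^{CG}_i$ with $i\in K$ is a negative integer, so $|c^{CG}_i|\ge 1$. Hence $|K| \le \Delta_{cut} \le (k+f)\beta + (k+f-1)$, which is the claimed bound.

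\textbf{Step 2: bounding $\tau$.} We apply \Cref{def:violation_threshold} to the cut written in the form of \cref{eq:generic_f}. The positive coefficients $a_j = c^{CG}_j$ are integers $\ge 1$. The subtracted terms have total weight $\sum_{k\in K} q_k = \Delta_{cut}$. The right-hand side could be taken as $\tilde b$ or as $\beta$.

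\begin{itemize}
\item If we use $\tilde b$, then any $i$ with $\sum_{j\le i} a_j \le \tilde b + \Delta_{cut}$ satisfies $i \le \tilde b + \Delta_{cut}$. Since $i$ and $\Delta_{cut}$ are integers, this gives $i \le \lfloor \tilde b\rfloor + \Delta_{cut}$.
\item If we use $\beta$, the same bound follows directly.
\end{itemize}

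In either case $\tau \le \beta + (k+f)\beta + (k+f-1) = \bar\tau$.

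\textbf{Step 3: the certificate.} We need $\tau + |K| \ge 1$ to invoke \Cref{lem:sos_exactness}. If $\tau + |K| = 0$, then $K=\emptyset$ and $\tau=0$. In that case either the cut has no positive variables, and it is then trivially $0\le\beta$, or the smallest positive coefficient exceeds the right-hand side. I would handle this degenerate case directly: either it is trivial, or it follows from the lower-degree certificate $\beta - \sum a_i x_i \ge 0$ via \Cref{lem:decomposition_theorem} with $\pi=0$. Otherwise we can simply replace the threshold by $\bar\tau \ge 1$, since the lemma only requires the threshold to be at most the given parameter.

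\Cref{lem:sos_exactness} then gives a $(\bar\tau+|K|+1)$-$\sos$ certificate for the rounded cut. Summing the two bounds, $\bar\tau + |K| + 1 \le (2(k+f)+1)\beta + 2(k+f-1) + 1 = (2(k+f)+1)\beta + (2(k+f)-1)$.

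\textbf{Step 4: low capacity.} If $\tilde b \le d$, then $\beta \le d$, so the degree is at most $D$. Monotonicity of the hierarchy, combined with \Cref{rem:duality}, shows that any solution of the $D$-$\sos$ relaxation satisfies every inequality that has a degree-$\le D$ certificate, and in particular satisfies this cut.

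\textbf{Main obstacle.} The delicate point is matching the hypotheses of \Cref{lem:sos_exactness}. That lemma requires the inequality to be in the form of \cref{eq:generic_f}, obtained as a nonnegative combination of constraints of $P$. Here, however, the multipliers $\mu_i$ on the lower bounds $-x_i\le 0$ are also used. I would argue that including the bound rows as generators is allowed, since they are constraints of $P$. The remaining care is in checking that the violation threshold, and with it the degree bound, is computed on the integer-coefficient tight form of the cut.
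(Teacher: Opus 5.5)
Your proposal is correct and follows the paper's proof essentially step for step. You bound $|K|$ by $\Delta_{cut}$ via \Cref{thm:restricted_weight_bound} and integrality, and you bound $\tau \le \lfloor\tilde b\rfloor + \Delta_{cut}$ because every positive coefficient is at least $1$. You then invoke \Cref{lem:sos_exactness} with parameter $\bar\tau$ and conclude via \Cref{rem:duality}, which is what the paper does.

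Your case split on $\tau+|K|=0$ is unnecessary. The paper just notes $\bar\tau \ge k+f-1 \ge 2$, so the hypothesis of \Cref{lem:sos_exactness} always holds with parameter $\bar\tau$. Your appeal in that case to \Cref{lem:decomposition_theorem} with $\pi=0$ would also not fit that lemma, which requires $\pi$ to be a positive integer. Your own fallback to $\bar\tau$ already covers that case.
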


\begin{proof}
We can write the non-dominated cut in the mixed-sign form of \Cref{eq:generic_f}, with positive support $S = \{i : c^{CG}_i > 0\}$, negative support $K = \{i : c^{CG}_i < 0\}$, and right-hand side $\lfloor\tilde{b}\rfloor$: $\sum_{i\in S} c^{CG}_i x_i - \sum_{i\in K} (-c^{CG}_i) x_i \le \lfloor\tilde{b}\rfloor$.

By \Cref{thm:restricted_weight_bound}, the total negative weight satisfies $\sum_{i \in K} (-c^{CG}_i) \le (k+f)\lfloor\tilde{b}\rfloor + (k+f) - 1$. As every $-c^{CG}_i \ge 1$ for $i \in K$, this gives $|K| \le \sum_{i \in K}(-c^{CG}_i) \le (k+f)\lfloor\tilde{b}\rfloor + (k+f) - 1$.

Since the positive coefficients are integers $\ge 1$, the violation threshold $\tau$ (\Cref{def:violation_threshold}) of the constraint satisfies
\[
    \tau \;\le\; \tilde{b} + \sum_{i \in K}(-c^{CG}_i) \;\le\; \tilde{b} + (k+f)\lfloor \tilde{b}\rfloor + (k+f) - 1 ,
\]
and, since $\tau$ is an integer and $(k+f)\lfloor\tilde{b}\rfloor + (k+f) - 1 \in \mathbb{Z}$, $\tau \le \lfloor \tilde{b} + (k+f)\lfloor\tilde{b}\rfloor + (k+f) - 1\rfloor = (k+f+1)\lfloor\tilde{b}\rfloor + (k+f) - 1$.

The cut is a nonnegative combination of the constraints of $P$ (\Cref{rem:cuts_first_closure}), its violation threshold is at most $\bar\tau$, and $\bar\tau + |K| \ge \bar\tau \ge k+f-1 \ge 2 \ge 1$ since $k \ge 1$ and $f \ge 2$. Hence \Cref{lem:sos_exactness} applies with threshold bound $\bar\tau$ and the cut admits a $(\bar\tau + |K| + 1)$-$\sos$ certificate, where
$\bar\tau + |K| + 1 \le \bigl((k+f+1)\lfloor\tilde{b}\rfloor + (k+f-1)\bigr) + \bigl((k+f)\lfloor\tilde{b}\rfloor + (k+f-1)\bigr) + 1 = \bigl(2(k+f)+1\bigr)\lfloor\tilde{b}\rfloor + \bigl(2(k+f) - 1\bigr)$.

If $\tilde{b} \le d$ ($d\in \mathbb{Z}_{\ge 2}$), this is at most $D = \bigl(2(k+f)+1\bigr)d + \bigl(2(k+f) - 1\bigr)$, so the cut admits a $D$-$\sos$ certificate and, by \Cref{rem:duality}, the $D$-$\sos$ relaxation satisfies it.

\end{proof}

\begin{lemma}[Geometric Scaling for High-Capacity Cuts]
\label{lem:scaling_all_ranks}
Fix an integer $d \ge 2$ and let $\phi := 1 - \tfrac1d \in (0,1)$. Let $I$ be an instance of a maximization problem subject to a system of constraints satisfying \Cref{ass:k_slack}. Let $\sum_{i=1}^n c_i^{CG} x_i \le \lfloor \tilde{b}\rfloor$ be a rank-1 CG cut obtained by rounding a fractional inequality $\sum_{i=1}^n c_i^{CG} x_i \le \tilde{b}$ valid for the polytope $P$ underlying $I$, with $c^{CG} \in \mathbb{Z}^n$ and high capacity $\tilde{b} > d$. Then for every $y \in  P$ satisfying the fractional inequality $\sum_{i=1}^n c_i^{CG} y_i \le \tilde{b}$, it holds that $\phi\, y$ satisfies the rounded cut:
\[
    \sum_{i=1}^n c_i^{CG}\, \phi\, y_i \;\le\; \lfloor \tilde{b}\rfloor .
\]
\end{lemma}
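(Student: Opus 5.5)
The plan is a direct one-line estimate on the left-hand side, split according to its sign. Write $L := \sum_{i=1}^n c^{CG}_i y_i$, so the hypothesis is $L \le \tilde b$. The goal is $\phi L \le \lfloor \tilde b \rfloor$. Note that nothing about the $k$-slack structure, the multipliers, or membership $y \in P$ is needed beyond the single fractional inequality. The lemma is purely arithmetic in the capacity $\tilde b$.

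\emph{Case $L \le 0$.} Since $\phi > 0$, we have $\phi L \le 0$. Also $\tilde b > d \ge 2$ gives $\lfloor \tilde b \rfloor \ge 2 > 0$. Hence $\phi L \le 0 \le \lfloor \tilde b \rfloor$. This case is needed because the cut has mixed signs, so $L$ may be negative, and then scaling by $\phi$ increases $L$. It is harmless only because the right-hand side is nonnegative.

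\emph{Case $L > 0$.} Multiplying the hypothesis by $\phi > 0$ gives
\[
\phi L \;\le\; \phi \tilde b \;=\; \tilde b - \frac{\tilde b}{d}.
\]
The high-capacity assumption $\tilde b > d$ means $\tilde b/d > 1$, so $\phi L < \tilde b - 1$. Finally, $\tilde b - 1 < \lfloor \tilde b \rfloor$ always holds, which yields $\phi L < \lfloor \tilde b \rfloor$.

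I do not expect a real obstacle. The only point requiring care is the sign issue handled in the first case. The key quantitative observation is that the amount removed by scaling, $\tilde b/d$, exceeds the maximal rounding loss $\tilde b - \lfloor \tilde b \rfloor < 1$ exactly when $\tilde b > d$. This is why the threshold between this lemma and \Cref{lem:sos_degree_all_ranks} is placed at $d$.
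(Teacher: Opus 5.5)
Your proof is correct and is the paper's argument: multiply $L \le \tilde b$ by $\phi$, then use $\tilde b/d > 1 > \tilde b - \lfloor \tilde b \rfloor$. The split on the sign of $L$ is harmless but unnecessary, since multiplying by $\phi > 0$ gives $\phi L \le \phi \tilde b$ whatever the sign of $L$, so your second case alone covers everything, as in the paper.
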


\begin{proof}
Since $\phi = 1 - \tfrac1d \ge 0$, multiplying the fractional inequality
$\sum_i c_i^{CG} y_i \le \tilde{b}$ by $\phi$ yields
\[
    \sum_{i=1}^n c_i^{CG}\, \phi\, y_i = \phi \sum_{i=1}^n c_i^{CG} y_i
    \;\le\; \phi\, \tilde{b} = \tilde{b} - \frac{\tilde{b}}{d} .
\]
As $\tilde{b} > d$, we have $\tilde{b}/d > 1$, so $\tilde{b} - \tilde{b}/d < \tilde{b} - 1 < \lfloor \tilde{b}\rfloor$. Hence $\sum_{i=1}^n c_i^{CG}\, \phi\, y_i \le \lfloor \tilde{b}\rfloor$.
\end{proof}

\subsection{Approximation Guarantee}\label{sec:approx_guarantee}
We combine \Cref{lem:sos_degree_all_ranks,lem:scaling_all_ranks} to show that a scaled solution satisfies all rank-1 CG cuts.

\begin{corollary}[The Scaled $D$-$\sos$ Solution Lies in the First $\Lambda_f$-CG Closure]
\label{cor:scaled_in_Pt}
Let $I$ be an instance of a maximization problem subject to generalized Horn constraints satisfying \Cref{ass:k_slack} for some constant $k$; fix the integers $d \ge 2$ and $f \ge 2$, and let $x^\ast \in [0,1]^{n}$ be a feasible solution of the $D$-$\sos$ relaxation of $\mathcal{F}_{A,b}$, with $D=\bigl(2(k+f)+1\bigr)d+\bigl(2(k+f)-1\bigr)$. Then $\bigl(1-\tfrac1d\bigr) x^\ast \in P^{(1)}_{\Lambda_f}$.
\end{corollary}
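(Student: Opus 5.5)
To show $\hat x := \phi x^\ast \in P^{(1)}_{\Lambda_f}$ with $\phi = 1-\tfrac1d$, we check two things: that $\hat x \in P$, and that $\hat x$ satisfies every rank-1 $\Lambda_f$-CG cut.

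\textbf{Membership in $P$.} The $D$-$\sos$ relaxation includes the linear axioms (degree-$0$ multipliers of the constraints), so $x^\ast \in P$. Since $P \subseteq [0,1]^n$, we get $0 \le \phi x^\ast \le x^\ast \le 1$, so the box constraints hold for $\hat x$. For a generalized Horn row $\sum_i a_i x_i - q y \le b$ with $b \ge 0$, scaling by $\phi \in (0,1)$ gives a left-hand side of $\phi\cdot(\text{LHS at }x^\ast) \le \phi b \le b$. Hence $\hat x \in P$.

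\textbf{Reduction to tight cuts.} By \Cref{thm:cg-bound-redundancy}, every $\Lambda_f$-CG cut is dominated by the tight cut with the same $\lambda$ and with $\mu = \mu^{tight} \in [0,1)^n$. So it suffices to consider non-dominated tight cuts $\sum_i c^{CG}_i x_i \le \lfloor \tilde b\rfloor$, where $\tilde b = \sum_\ell \lambda_\ell b_\ell \ge 0$. We split on the size of $\tilde b$.

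\emph{Case $\tilde b \le d$.} \Cref{lem:sos_degree_all_ranks} gives a $D$-$\sos$ certificate for the cut. By \Cref{rem:duality}, every solution of the $D$-$\sos$ relaxation satisfies the cut, and in particular its first-moment vector $x^\ast$ does: $c^{CG\top} x^\ast \le \lfloor\tilde b\rfloor$. Because $\lfloor\tilde b\rfloor \ge 0$, we have $\phi\, c^{CG\top}x^\ast \le \phi\lfloor\tilde b\rfloor \le \lfloor \tilde b\rfloor$, where the second inequality holds even when $c^{CG\top}x^\ast$ is negative. So $\hat x$ satisfies the cut.

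\emph{Case $\tilde b > d$.} The unrounded inequality $c^{CG\top}x \le \tilde b$ is a nonnegative combination of the rows of $P$, and $x^\ast \in P$, so $x^\ast$ satisfies it. \Cref{lem:scaling_all_ranks} with $y = x^\ast$ then gives $\phi\, c^{CG\top}x^\ast \le \lfloor\tilde b\rfloor$.

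Intersecting over all cuts yields $\hat x \in P^{(1)}_{\Lambda_f}$.

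\textbf{Main obstacle.} The only delicate step is the low-capacity case. It needs the fact that a degree-$D$ certificate implies the inequality holds for the first moments of any $D$-$\sos$ solution. That is exactly \Cref{rem:duality}, applied to an affine polynomial certified by sums of squares times the axioms.
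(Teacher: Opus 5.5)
Your proposal is correct and follows the paper's proof essentially step for step: you reduce to non-dominated tight cuts via \Cref{thm:cg-bound-redundancy}, then treat $\tilde b\le d$ with \Cref{lem:sos_degree_all_ranks}, \Cref{rem:duality} and $\lfloor\tilde b\rfloor\ge 0$, and treat $\tilde b>d$ with \Cref{lem:scaling_all_ranks}. Your explicit check that $\hat x\in P$ is a small point the paper leaves implicit; it includes $\hat x\ge 0$, which the domination argument of \Cref{thm:cg-bound-redundancy} relies on.
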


\begin{proof}
Let $\phi := 1 - \tfrac1d \in (0,1)$ and $\hat{x} := \phi\, x^\ast$. $x^\ast$ satisfies the system's constraints and lies in $[0,1]^{n}$, so $x^\ast \in P^{(0)} = P$. Since $P^{(1)}_{\Lambda_f}$ is the intersection of its rank-1 $\Lambda_f$-cuts and every dominated cut is implied by a non-dominated one, it suffices to show that $\hat{x}$ satisfies every non-dominated rank-1 $\Lambda_f$-cut. Let
$\sum_{i=1}^n c_i^{CG} x_i \le \lfloor \tilde{b}\rfloor$ be such a cut, derived
from a fractional inequality $\sum_{i=1}^n c_i^{CG} x_i \le \tilde{b}$ valid for $P$ (with $c^{CG} \in \mathbb{Z}^n$ and $\tilde{b}\ge 0$, hence $\lfloor\tilde{b}\rfloor \ge 0$). We show $\hat{x}$ satisfies the cut.

\emph{Case $\tilde{b} \le d$.} By \Cref{lem:sos_degree_all_ranks} the cut admits a $t$-$\sos$ certificate with $t \le \bigl(2(k+f)+1\bigr)\lfloor\tilde{b}\rfloor + \bigl(2(k+f) - 1\bigr) \le \bigl(2(k+f)+1\bigr)d + \bigl(2(k+f) - 1\bigr)$, so by \Cref{rem:duality} the $D$-$\sos$ solution $x^\ast$ satisfies it, and $\sum_i c_i^{CG} x^\ast_i \le \lfloor\tilde{b}\rfloor$. Since $\lfloor\tilde{b}\rfloor \ge 0$ and $\phi \in (0,1)$, the scaling operation preserves the inequality:
\[
    \sum_{i=1}^n c_i^{CG} \hat{x}_i = \phi \sum_{i=1}^n c_i^{CG} x^\ast_i \le \lfloor\tilde{b}\rfloor
\]
(if $\sum_i c_i^{CG} x^\ast_i \ge 0$ the factor $\phi \le 1$ only lowers the left-hand side; otherwise if the lhs is negative, it is automatically $\le \lfloor\tilde{b}\rfloor$).

\emph{Case $\tilde{b} > d$.} Since $x^\ast \in P$ satisfies the fractional inequality $\sum_i c_i^{CG} x^\ast_i \le \tilde{b}$, by \Cref{lem:scaling_all_ranks}
\[
    \sum_{i=1}^n c_i^{CG} \hat{x}_i = \sum_{i=1}^n c_i^{CG}\, \phi\, x^\ast_i \le \lfloor\tilde{b}\rfloor .
\]
In both cases $\hat{x}$ satisfies the cut; as this holds for every non-dominated cut, $\hat{x}\in P^{(1)}_{\Lambda_f}$.
\end{proof}

We are finally ready to prove \Cref{thm:main_ptas}.

\begin{proof}[Proof of \Cref{thm:main_ptas}]

Let $d := \lceil 1/\ln(1+\varepsilon)\rceil + 1$ (notice that $d \ge 2$ for every $\varepsilon > 0$) and $D := (2(k+f)+1)d + (2(k+f)-1)$ as in \Cref{cor:scaled_in_Pt}; since $\ln(1+\varepsilon) \ge \varepsilon/2$ for $\varepsilon \in (0,1]$, we have $d = O(1/\varepsilon)$ and hence $D = O((k+f)/\varepsilon)$.

The lower bound $\mathrm{OPT}(I) \le \mathrm{OPT}_D(I)$ holds because $\mathrm{OPT}_D(I)$ is the optimum over a relaxation of the integer polytope for a maximization problem.

For the upper bound, let $x^\ast$ attain $\mathrm{OPT}_D(I) = c^\top x^{\ast}$.
By \Cref{cor:scaled_in_Pt}, $\hat{x} := (1-\tfrac1d) x^\ast \in P^{(1)}_{\Lambda_f}$, so $c^\top \hat{x} \le \mathrm{OPT}^{(1)}_{\Lambda_f}(I)$; by linearity $c^\top \hat{x} = (1-\tfrac1d) \mathrm{OPT}_D(I)$, hence $\bigl(1-\tfrac1d\bigr) \mathrm{OPT}_D(I) \le \mathrm{OPT}^{(1)}_{\Lambda_f}(I)$.
We can rewrite $-\ln(1-\tfrac1d) = \ln\bigl(1+\tfrac{1}{d-1}\bigr) \le \tfrac{1}{d-1}$ (by the standard inequality $\ln(1+x) \le x$ valid for all $x > -1$, applied with $x = \frac{1}{d-1}$), thus, recalling $d := \lceil 1/\ln(1+\varepsilon)\rceil + 1$:
\[
    \ln \bigl(1-\tfrac1d\bigr)^{-1} = -\ln\bigl(1-\tfrac1d\bigr) \le \frac{1}{d-1}
    = \frac{1}{\lceil 1/\ln(1+\varepsilon) \rceil} \le \ln(1+\varepsilon).
\]

Hence, $(1-\tfrac1d)^{-1} \le 1+\varepsilon$ (since $\ln$ is a strictly increasing function), and, since $\mathrm{OPT}^{(1)}_{\Lambda_f}(I) \ge 0$ (the all-zero vector lies in $P^{(1)}_{\Lambda_f}$, as every cut has nonnegative right-hand side),
\[
    \mathrm{OPT}_D(I) \;\le\; \bigl(1-\tfrac1d\bigr)^{-1}\,\mathrm{OPT}^{(1)}_{\Lambda_f}(I)
    \;\le\; (1+\varepsilon)\,\mathrm{OPT}^{(1)}_{\Lambda_f}(I).
\]
\end{proof}

\section{Discussion and Future Work}
\label{sec:discussion}

In this paper we proved that, using the $\sos$ hierarchy, it is possible to approximate to any fixed precision $\varepsilon$ the rank-1 $\Lambda_f$-CG closure for any constant $f \ge 2$.

It is clear that we did not answer \Cref{Q:PTAS_MIN} exhaustively: our algorithm handles only a subclass of min-closed systems, namely the systems of generalized Horn constraints satisfying \Cref{ass:k_slack}.

It is worth making explicit what this subclass is, in logical terms. By the Horn characterization of min-closed relations~\cite{JeavonsC95}, every min-closed Boolean feasible region is a conjunction of Boolean Horn clauses, and \Cref{ass:k_slack} separates these clauses according to their premise width. A \emph{composite} clause, of premise width $|P| \ge 2$, has the linear representation \eqref{eq:horn_linear} with capacity $b = |P| - 1 \ge 1$ and slack coefficient $q = 1$, so that $q \le b$: composite logic satisfies \Cref{ass:k_slack} with $k = 1$. A \emph{unitary} implication $x_i \Rightarrow y$, on the other hand, has the representation $x_i - y \le 0$, with $b = 0$ and $q = 1$, which violates $q \le k\,b$ for every constant $k$. Our regime requires the systems to be formulated with a positive capacity, and its boundary consists of the \emph{zero-capacity} constraints. We remark that a unitary implication can also be written with a positive capacity, for instance as $a x_i - q y \le b$ with $b \ge 1$, in which case it does satisfy \Cref{ass:k_slack}; it is the zero-capacity formulation that falls outside our analysis. This leaves the following question open.

\begin{question}[Approximability of Zero-Capacity Systems]
\label{Q:PTAS_zero_cap_open}
Let $\mathcal{F}_{A,b}$ be a min-closed Boolean feasible region whose linear formulation contains zero-capacity implications $x_i - y \le 0$. Does there exist a PTAS for maximizing linear objectives over its first Chv\'atal--Gomory closure? Does there exist one based on the $\sos$ hierarchy?
\end{question}

Other natural questions remain open as well: can methods other than the $\sos$ hierarchy achieve the same guarantees, and are there complexity lower bounds that prevent extending them to larger subclasses of min-closed systems?

\bibliography{lipics-v2021-sample-article}

\appendix

\section{Background: Sum of Squares Over the Boolean Hypercube}\label{sec:preliminaries}

We recall the formal definition of the $\sos$ proof system restricted to the Boolean domain (see e.g. \cite{FlemingKothariPitassi19} for further details). 

\begin{definition}[Boolean $\sos$ Refutation and Derivation]\label{def:sos_system}
Let $x = (x_1, \dots, x_n)$ be real variables. Let $\mathcal{I}_2 = \langle x_i^2 - x_i \mid i \in [n] \rangle$ denote the standard Boolean ideal. Two polynomials $A, B \in \mathbb{R}[x]$ are congruent modulo $\mathcal{I}_2$, written $A \equiv B \pmod{\mathcal{I}_2}$, if $A - B \in \mathcal{I}_2$.

Let $\mathcal{P} = \{P_j(x) \ge 0 \mid j \in [m]\}$ be a system of polynomial inequalities. A degree-$d$ $\sos$ derivation of a target inequality $C(x) \ge 0$ from $\mathcal{P}$ is an identity of the form:
\begin{equation}\label{eq:sos_def}
    C(x) = S_0(x) + \sum_{j=1}^m S_j(x) P_j(x) + \sum_{i=1}^n Q_i(x)(x_i^2 - x_i)
\end{equation}
where $S_0, S_1, \dots, S_m \in \mathbb{R}[x]$ are sums of squares of polynomials, $Q_1, \dots, Q_n \in \mathbb{R}[x]$ are arbitrary polynomial multipliers, and the degree of every term in the expansion is at most $d$.

A degree-$d$ $\sos$ refutation of $\mathcal{P}$ is a derivation of the constant polynomial $C(x) = -1$.

Following~\cite{MastrolilliHsos}, for an integer $t \ge 1$ a $t$-$\sos$ certificate of $C(x) \ge 0$ from $\mathcal{P}$ is a derivation \eqref{eq:sos_def} in which every polynomial squared in $S_0, S_1, \dots, S_m$ has degree at most $t$ (the multipliers $Q_i$ are unconstrained). Since the $P_j$ are linear, a $t$-$\sos$ certificate is in particular a degree-$(2t+1)$ derivation, and a $t$-$\sos$ certificate is also a $t'$-$\sos$ certificate for every $t' \ge t$.

\end{definition}

\begin{remark}[Multilinearization and Indicators]
Over the quotient ring $\mathbb{R}[x]/\mathcal{I}_2$, every polynomial reduces to a unique multilinear polynomial. Consequently, any nonnegative function over $\{0,1\}^n$ admits an exact $\sos$ representation. Specifically, for any point $v \in \{0,1\}^n$, its multilinear indicator $\chi_v(x) = \prod_{i: v_i=1} x_i \prod_{i: v_i=0} (1-x_i)$ is idempotent modulo $\mathcal{I}_2$, satisfying $\chi_v(x) \equiv \chi_v(x)^2 \pmod{\mathcal{I}_2}$. Thus, standard variable bounds such as $x_i \ge 0$ and $1-x_i \ge 0$ are certified at degree 2 via the identities $x_i \equiv x_i^2$ and $1-x_i \equiv (1-x_i)^2$.
\end{remark}

\begin{remark}[Degree Conventions in Boolean Proof Complexity]
In proof complexity (e.g.~\cite{FlemingKothariPitassi19,ODonnell17}) the \emph{degree} of a derivation is the total degree of the identity, as in \Cref{def:sos_system}. In the optimization literature, and in~\cite{MastrolilliHsos}, the hierarchy is indexed by the degree $t$ of the polynomials being squared: the $t$-$\sos$ relaxation is the semidefinite program whose moment matrix is indexed by the multilinear monomials of degree at most $t$, hence has size $n^{O(t)}$, and whose dual cone consists of the $t$-$\sos$ certificates. Throughout this paper the parameters $\tau+|K|+1$ (\Cref{lem:sos_exactness}) and $D$ (\Cref{thm:main_ptas}) are levels in the second sense.
\end{remark}

\begin{remark}[Duality Between Derivations and Relaxations]\label{rem:duality}
    The $t$-$\sos$ relaxation of a system $\mathcal{P}$ optimizes over \emph{pseudo-expectation operators}: linear functionals $\tilde{\mathbb{E}}$ on multilinear polynomials of degree at most $2t+1$ with $\tilde{\mathbb{E}}[1] = 1$, $\tilde{\mathbb{E}}[q^2] \ge 0$ and $\tilde{\mathbb{E}}[q^2 P_j] \ge 0$ for every $q$ of degree at most $t$ and every $j \in [m]$, products being reduced modulo $\mathcal{I}_2$. 
    Applying $\tilde{\mathbb{E}}$ to a $t$-$\sos$ certificate \eqref{eq:sos_def} of $C(x) \ge 0$ gives $\tilde{\mathbb{E}}[C] = \tilde{\mathbb{E}}[S_0] + \sum_j \tilde{\mathbb{E}}[S_j P_j] \ge 0$; since all polynomials are reduced modulo $\mathcal{I}_2$ before evaluation, the ideal terms vanish.
    Hence every solution of the $t$-$\sos$ relaxation satisfies every inequality admitting a $t$-$\sos$ certificate from $\mathcal{P}$~\cite[Proposition~2.2]{MastrolilliHsos}. 
    This is the direction used in \Cref{lem:sos_degree_all_ranks} and \Cref{cor:scaled_in_Pt}. We identify a solution $\tilde{\mathbb{E}}$ of the $t$-$\sos$ relaxation with its vector of first moments $x^\ast \in [0,1]^n$, $x^\ast_i := \tilde{\mathbb{E}}[x_i]$, and say that $x^\ast$ satisfies a linear inequality when $\tilde{\mathbb{E}}$ satisfies it.

\end{remark}

\section{The Exponential Expressivity Gap}\label{app:exponential_gap}

\begin{example}[The Exponential Expressivity Gap]\label{ex:exponential_gap}
To illustrate the expressive power of weighted min-closed constraints over classical Horn-$\SAT$, consider a system with $n$ prerequisite variables $x_1, \dots, x_n$ and a single target variable $y$. We want to enforce the logical condition: ``If any $\theta$ prerequisites are true, then the target $y$ must be true''.
Using a single weighted inequality, this is exactly captured by:
\begin{equation}\label{eq:threshold_min_closed}
    \sum_{i=1}^n x_i - n \cdot y \le \theta - 1
\end{equation}
This inequality is a generalized Horn constraint, so it is min-closed due to \Cref{prop:min_closed}. In standard Horn-$\SAT$, however, clauses are unweighted ($x_{i_1} \wedge \dots \wedge x_{i_c} \implies y$). To represent the exact same Boolean feasible region without auxiliary variables, a purely logical CNF formulation must explicitly forbid every possible subset of $\theta$ prerequisites from being true when $y=0$. This requires the conjunction of all possible minimal Horn clauses of premise width $\theta$:
\begin{equation}
    \bigwedge_{S \subseteq [n] : |S|=\theta} \left( \sum_{i \in S} x_i - y \le \theta - 1 \right)
\end{equation}
Setting the threshold to $\theta = \lfloor n/2 \rfloor$, the equivalent Horn-$\SAT$ formulation requires $\binom{n}{\lfloor n/2 \rfloor} = \Theta\left(\frac{2^n}{\sqrt{n}}\right)$ clauses. Also notice that for $\theta = \lfloor n/2 \rfloor$, the generalized Horn constraint also satisfies \Cref{ass:k_slack} for a small constant $\approx 2$. This exponential compression highlights the necessity of handling dense weights natively.
\end{example}

\section{Proof of \texorpdfstring{\Cref{lem:sos_exactness}}{SOS Exactness for Bounded Thresholds Lemma}}\label{sect:proof_lem:sos_exactness}
\begin{proof}[Proof of \Cref{lem:sos_exactness}]
Put $y'_k := 1 - y_k$ for $k \in K$, and let
\[
    z := (x_S, y'_K) \in \{0,1\}^N, \qquad w := (a, q) \in \mathbb{Z}^N_{\ge 1}, \qquad C := b + \sum_{k \in K} q_k \in \mathbb{R}_{\ge 0},
\]
where $N := |S| + |K|$.
It can be seen that $C - w^\top z = b - \sum_{i \in S} a_i x_i + \sum_{k \in K} q_k y_k$ as polynomials in $(x,y)$, and since $\sum_k q_k \in \mathbb{Z}$ it also holds that $\lfloor C \rfloor = \lfloor b \rfloor + \sum_k q_k$, so that the CG cut can be written as:
\begin{equation}\label{eq:target_packing_form}
    \lfloor b \rfloor - \sum_{i \in S} a_i x_i + \sum_{k \in K} q_k y_k = \lfloor C \rfloor - w^\top z.
\end{equation}
Hence, the target inequality is $\lfloor C \rfloor - w^\top z \ge 0$, written in the variables $z$.

Let $\mathcal{P}' := \{ z \in \{0,1\}^N : w^\top z \le C \}$; notice that $\mathcal{P}'$ is a packing instance with the single row $w^\top \in \mathbb{Z}^{1 \times N}_+$ and capacity $C \in \mathbb{R}_+$. For $z \in \mathcal{P}'$ the number $w^\top z$ is an integer not exceeding $C$, hence not exceeding $\lfloor C \rfloor$; thus $\lfloor C \rfloor - w^\top z \ge 0$ is a valid inequality for $\mathcal{P}'$, with $a_0 = \lfloor C \rfloor \ge 0$ and coefficient vector $w \ge 0$.

Now we prove that the violation threshold of the inequality $w^\top z \leq C$ is at most $\tau + |K|$. Sort $w_1 \le \dots \le w_N$ and let $\tau^\ast := \max\{ i : \sum_{j \le i} w_j \le C \}$ be the violation threshold of $w^\top z \le C$. If $\tau + |K| \ge N$, we are done since $\tau^\ast \leq N$. Else, take any $i$ such that $N \ge i > \tau + |K|$. Among the $i$ smallest entries of $w$ at most $|K|$ come from the block $q$, so at least $i - |K| > \tau$ variables come from the block $a$; therefore $\sum_{j \le i} w_j$ is at least the sum of the $\tau + 1$ smallest entries of $a$, which exceeds $b + \sum_{k \in K} q_k = C$ by the definition of $\tau$. Hence $\tau^\ast \le \tau + |K|$. Notice that the violation threshold for the inequality $w^\top z \le \lfloor C \rfloor$ is the same, since all $w$'s are integral.

We can apply \Cref{lem:decomposition_theorem} to $\mathcal{P}'$ with $\pi:=  \tau + |K| \ge 1$ by hypothesis, and $\pi \ge \tau^\ast$ by the previous argument: hence, there exist sums of squares $\sigma_0, \sigma_1$ of polynomials of degree at most $\pi + 1$ in $z$, and polynomials $Q_1, \dots, Q_N$, such that
\begin{equation}\label{eq:certificate_packing}
    \lfloor C \rfloor - w^\top z = \sigma_0(z) + \sigma_1(z)\,\bigl(C - w^\top z\bigr) + \sum_{j=1}^N Q_j(z)\,(z_j^2 - z_j).
\end{equation}

By hypothesis the fractional inequality is a nonnegative combination of the constraints $g_\ell(x,y) \ge 0$ of $P$: i.e., there are $\lambda_\ell \ge 0$ with
\[
    C - w^\top z = b - \sum_{i \in S} a_i x_i + \sum_{k \in K} q_k y_k = \sum_{\ell} \lambda_\ell\, g_\ell(x,y).
\]
Substitute $z = (x_S, 1 - y_K)$ throughout \eqref{eq:certificate_packing}. The substitution is an invertible affine change of variables, so it preserves degrees and maps squares to squares; hence $\sigma_0(x_S, 1 - y_K)$ and each $\lambda_\ell\, \sigma_1(x_S, 1 - y_K)$ are sums of squares of polynomials of degree at most $\pi + 1$, and
\[
    \sigma_1(z)\,\bigl(C - w^\top z\bigr) = \sum_{\ell} \bigl[\lambda_\ell\, \sigma_1(x_S, 1 - y_K)\bigr]\, g_\ell(x,y).
\]
The ideal terms remain in the Boolean ideal even after the substitution, since $(1 - y_k)^2 - (1 - y_k) = y_k^2 - y_k$.

Finally, we can write:
\begin{align*}
    \lfloor b \rfloor - \sum_{i \in S} a_i x_i + \sum_{k \in K} q_k y_k
    &= \sigma_0(x_S, 1 - y_K) + \sum_{\ell} \bigl[\lambda_\ell\, \sigma_1(x_S, 1 - y_K)\bigr]\, g_\ell(x,y) \\
    &\quad + \sum_{i \in S} \tilde{Q}_i(x,y)\,(x_i^2 - x_i) + \sum_{k \in K} \tilde{Q}_k(x,y)\,(y_k^2 - y_k),
\end{align*}
for suitable polynomials $\tilde{Q}$ obtained by substituting $z = (x_S, 1-y_K)$ in $Q_j$'s. The latter is a $(\pi + 1)$-$\sos$ certificate, that is, a $(\tau + |K| + 1)$-$\sos$ certificate of the Chv\'atal--Gomory rounding from the constraints of $P$.
\end{proof}

\section{Proof of \texorpdfstring{\Cref{thm:cg-bound-redundancy}}{Theorem Redundancy of Non-Minimal Multipliers for Non-Negativity Bounds}}\label{app:proof_of_thm_cg-bound-redundancy}

\begin{proof}[Proof of \Cref{thm:cg-bound-redundancy}]
    
Let $\beta_{frac} = \sum_{\ell=1}^m \lambda_\ell b_\ell$, and let the rounded capacity be $\beta = \lfloor \beta_{frac} \rfloor$. By applying the structural multipliers $\lambda \in \Lambda_f^m$ to the initial system and the bound multipliers $\mu_i \ge 0$ to $-x_i \le 0$, the generic rank-1 $\Lambda_f$-CG cut $C_{generic}$ is obtained by rounding down the right-hand side of the valid conic combination:

\[
    \sum_{i=1}^n \big( w_i(\lambda) - \mu_i \big) x_i \le \beta_{frac}.
\] 

For the CG rounding operation to be valid, each coefficient $(w_i(\lambda) - \mu_i)$ must belong to $\mathbb{Z}$. Therefore, any valid multiplier $\mu_i$ can be uniquely decomposed as $\mu_i = \mu^{tight}_i + d_i$, where $\mu^{tight}_i \in [0, 1)$ absorbs the fractional part of $w_i(\lambda)$, and $d_i \in \mathbb{Z}_{\ge 0}$ is a nonnegative integer. 

We define the tight cut $C_{tight}$ by selecting the minimal fractional absorber $\mu_i = \mu^{tight}_i$ (which sets $d_i = 0$ for all variables):
\[
    \sum_{i=1}^n \big( w_i(\lambda) - \mu^{tight}_i \big) x_i \le \beta.
\]

Notice that both $C_{tight}$ and $C_{generic}$ share the same right hand side, since the inequalities $-x_i \leq 0$ contribute nothing to it.

Consider an arbitrary point $x \in \mathbb{R}_{\ge 0}^n$. We express the left-hand side (LHS) of $C_{generic}$ relative to the LHS of $C_{tight}$:
\begin{align}
    \text{LHS}_{generic}(x) & = \sum_{i=1}^n \big( w_i(\lambda) - \mu^{tight}_i - d_i \big) x_i \\
    & = \text{LHS}_{tight}(x) - \sum_{i=1}^n d_i x_i.
\end{align}
Because the point belongs to the nonnegative orthant ($x_i \ge 0$) and the integer offsets are nonnegative ($d_i \ge 0$), the subtraction term is nonnegative ($\sum d_i x_i \ge 0$). This establishes for all $x \ge 0$:
\[
    \text{LHS}_{generic}(x) \le \text{LHS}_{tight}(x)
\]
Assume $x$ satisfies the tight cut, meaning $\text{LHS}_{tight}(x) \le \beta$. By the inequality above, it follows that:
\[
    \text{LHS}_{generic}(x) \le \text{LHS}_{tight}(x) \le \beta.
\]
Thus, any point that satisfies $C_{tight}$ satisfies $C_{generic}$. Since $P^{(1)}_{\Lambda_f}$ is the intersection of $P$ with all valid rank-1 $\Lambda_f$-cuts, incorporating the bound $-x_i \le 0$ with any integer weight $d_i \ge 1$ beyond the minimal fractional absorber $\mu^{tight}_i$ is redundant.
\end{proof}

\section{Proof of \texorpdfstring{\Cref{thm:cg-bound-preservation}}{Theorem Preservation of Negative Weight Bounds in the CG Closure}}\label{app:proof-cg-bound-preservation}
\begin{proof}[Proof of \cref{thm:cg-bound-preservation}]
Let $\lambda \in \Lambda_f^m$ be the structural multipliers generating $C_{tight}$, with fractional capacity $\beta_{frac} = \sum_{\ell=1}^m \lambda_\ell b_\ell$ and rounded capacity $\beta = \lfloor \beta_{frac} \rfloor$. By definition of the floor function:
\[
    \beta_{frac} < \beta + 1 .
\]
Consider the coefficient $w_i(\lambda)$ prior to bound absorption. Since the packing coefficients and the structural multipliers are nonnegative ($a_{\ell, i} \ge 0$ and $\lambda_\ell\ge 0$), dropping the positive terms yields the lower bound:
\[
    w_i(\lambda) \ge - \sum_{\ell : j_\ell = i} \lambda_\ell c_\ell .
\]
By \Cref{ass:k_slack}, $c_\ell \le k \cdot b_\ell$ for all $\ell$. This yields:
\[
    \sum_{\ell : j_\ell = i} \lambda_\ell c_\ell \le k \sum_{\ell : j_\ell = i} \lambda_\ell b_\ell \le k \sum_{\ell=1}^m \lambda_\ell b_\ell = k \beta_{frac} .
\]
Therefore, $-w_i(\lambda) \le k \beta_{frac}$.

For a non-dominated cut, \cref{thm:cg-bound-redundancy} implies that the non-negativity multipliers are minimal, meaning $\mu^{tight}_i \in [0, 1)$, for every $i\in[n]$. The integer coefficient of the CG cut is $c^{CG}_i = w_i(\lambda) - \mu^{tight}_i$. Assuming $c^{CG}_i < 0$, its magnitude satisfies:
\[
    |c^{CG}_i| = -c^{CG}_i = -w_i(\lambda) + \mu^{tight}_i \le k\cdot \beta_{frac} + \mu^{tight}_i .
\]
Substituting $\beta_{frac} < \beta + 1$ and $\mu^{tight}_i < 1$, we obtain the strict inequality:
\[
    |c^{CG}_i| < k(\beta + 1) + 1 .
\]
Since the Chv\'atal--Gomory procedure guarantees $c^{CG}_i \in \mathbb{Z}$, its absolute value $|c^{CG}_i|$ is an integer. For integers, the relation $< k(\beta + 1) + 1$ is equivalent to $\le k(\beta + 1)$. Hence:
\[
    |c^{CG}_i| \le k(\beta + 1) .
\]
This confirms that the magnitude of any negative coefficient of any non-dominated cut in $P^{(1)}_{\Lambda_f}$ with capacity $\beta$ is bounded by $O(\beta)$, when $k$ is fixed.
\end{proof}

\section{Proof of \texorpdfstring{\Cref{thm:restricted_weight_bound}}{Theorem Bound on Total Negative Weight}}\label{app:proof-restricted-weight-bound}
\begin{proof}[Proof of \cref{thm:restricted_weight_bound}]
Let $I_{active} := \{ \ell : \lambda_\ell > 0 \; \text{and} \; c_\ell \ge 1 \}$ be the set of active constraints with nonzero negative coefficient. These are the only constraints that affect the quantity $\Delta_{cut}$. If $I_{active} = \emptyset$, then $\Delta_{cut} = 0$, and the claim is trivially true. Hence, we will assume that $I_{active} \ne \emptyset$.

Since the cut is non-dominated, \Cref{thm:cg-bound-redundancy} forces the multiplier of each $-x_i \le 0$ to be minimal ($\mu^{tight}_i < 1$), so the final coefficient is $c^{CG}_i = \lfloor w_i(\lambda)\rfloor$ and, when negative, $|c^{CG}_i| = \lceil -w_i(\lambda)\rceil$. Dropping the nonnegative terms of $w_i(\lambda)$, using the fact that $\lceil\cdot\rceil$ is monotone and subadditive, and knowing that each constraint has a single penalty variable $j_\ell$, the total negative weight is bounded by a sum over the active constraints:
\[
    \Delta_{cut} \le \sum_{\ell \in I_{active}} \lceil \lambda_\ell c_\ell \rceil
\]
Using the inequality $\lceil x \rceil < x + 1$ for $x \ge 0$, and that $I_{active} \ne \emptyset$:
\[
    \Delta_{cut} < \sum_{\ell \in I_{active}} (\lambda_\ell c_\ell + 1) = \sum_{\ell \in I_{active}} \lambda_\ell c_\ell + |I_{active}|.
\]
Given \Cref{ass:k_slack}, we know that $c_\ell \le k b_\ell$ for every $\ell \in [m]$, and since $b_\ell \ge 0$ and $\lambda_\ell \ge 0$, we have:
\[
    \sum_{\ell \in I_{active}} \lambda_\ell c_\ell \le k \sum_{\ell \in I_{active}} \lambda_\ell b_\ell \le k \sum_{\ell = 1}^{m} \lambda_\ell b_\ell < k( \beta + 1).
\]
For the last term, we utilize the fact that for every $x\ge 0$ it holds $x < \lfloor x \rfloor  + 1$. Since $\lambda_\ell \ge 1/f$ for the active constraints, it follows that $1 \le f \lambda_\ell$, hence $|I_{active}| = \sum_{\ell \in I_{active}} 1 \le \sum_{\ell \in I_{active}} f \lambda_\ell$.
Combining this with $b_\ell \ge 1$ for $\ell\in I_{active}$ (if $c_\ell \ge 1$, due to \Cref{ass:k_slack}, $b_\ell$ has to be positive):
\[
    |I_{active}| \le f \sum_{\ell \in I_{active}} \lambda_\ell b_\ell \le f \sum_{\ell \in [m]} \lambda_\ell b_\ell < f(\beta + 1).
\]
Substituting these bounds into the expression for $\Delta_{cut}$:
\[
    \Delta_{cut} < k(\beta + 1) + f(\beta + 1) = \beta(k + f) + k + f.
\]
Since $\Delta_{cut}$ must be an integer, it follows that $\Delta_{cut} \le \beta(k + f) + k + f - 1$.
\end{proof}

\end{document}